\relax
%File: formatting-instruction.tex
\documentclass[letterpaper]{article}
\usepackage{aaai17}
\usepackage{times}
\usepackage{helvet}
\usepackage{courier}

% For figures
\usepackage{graphicx} % more modern
%\usepackage{epsfig} % less modern

% For algorithms
\usepackage{algorithm}
\usepackage{algorithmic}

%% Additional packages, definitions, etc.
\usepackage{amsmath}
\usepackage{amsthm}
\usepackage{amssymb}
\usepackage{amsfonts}
\usepackage{bbm}
\usepackage{color}
\usepackage[margin=10pt]{subfig}

%% Define proof environment
\newtheorem{proposition}{Proposition}
\newtheorem{theorem}{Theorem}
\newtheorem{lemma}{Lemma}
\newtheorem{definition}{Definition}

\newtheorem{example}{Example}

\newcommand{\E}{\text{E}}
\newcommand{\1}{\mathbbm{1}}
\newcommand{\sat}{f,\text{sat}}
\newcommand{\unsat}{f,\text{unsat}}
\newcommand{\low}{f,\text{low}}
\newcommand{\up}{f,\text{up}}
\newcommand{\tribes}{\text{TRIBES}}
\newcommand{\one}{\boldsymbol{1}}

\newcommand\Tstrut{\rule{0pt}{2.6ex}}

\frenchspacing
\setlength{\pdfpagewidth}{8.5in}
\setlength{\pdfpageheight}{11in}
\pdfinfo{
/Title (General Bounds on Satisfiability Thresholds for Random CSPs via Fourier Analysis)
/Author (Colin Wei, Stefano Ermon)
}
\setcounter{secnumdepth}{2}  
 \begin{document}
% The file aaai.sty is the style file for AAAI Press 
% proceedings, working notes, and technical reports.
%
\title{General Bounds on Satisfiability Thresholds for Random CSPs via Fourier Analysis}
\author{Colin Wei \and Stefano Ermon\\
Computer Science Department\\
Stanford University\\
\texttt{\{colinwei,ermon\}@cs.stanford.edu}
}
\maketitle
\begin{abstract}
Random constraint satisfaction problems (CSPs) have been widely studied both in AI and complexity theory. Empirically and theoretically, many random CSPs have been shown to exhibit a phase transition. As the ratio of constraints to variables passes certain thresholds, they transition from being almost certainly satisfiable to unsatisfiable. The exact location of this threshold has been thoroughly investigated, but only for certain common classes of constraints.

In this paper, we present new bounds for the location of these thresholds in boolean CSPs. Our main contribution is that our bounds are fully general, and apply to any fixed constraint function that could be used to generate an ensemble of random CSPs. These bounds rely on a novel Fourier analysis and can be easily computed from the Fourier spectrum of a constraint function. Our bounds are within a constant factor of the exact threshold location for many well-studied random CSPs. We demonstrate that our bounds can be easily instantiated to obtain thresholds for many constraint functions that had not been previously studied, and evaluate them experimentally.
\end{abstract}

\section{Introduction}
Constraint satisfaction problems (CSPs) are widely used in AI, with applications in optimization, control, and planning \cite{russell2003artificial}. 
While many classes of CSPs are intractable in the worst case~\cite{cook1971complexity}, many real-world CSP instances are easy to solve in practice~\cite{vardi2014boolean}.  
As a result, there has been significant interest in understanding the average-case complexity of CSPs from multiple communities, such as AI, theoretical computer science, physics, and combinatorics~\cite{biere2009handbook}. 

Random CSPs are an important model for studying the average-case complexity of CSPs. Past works have proposed several distributional models for random CSPs \cite{molloy2003models,creignou2003generalized}. An interesting feature arising from many models is a phase transition phenomenon that occurs as one changes the ratio of number of constraints, $m$, to the number of variables, $n$. Empirical results \cite{mitchell1992hard} show that for many classes of CSPs, randomly generated instances are satisfiable with probability near 1 when $m/n$ is below a certain threshold. For $m/n$ larger than a threshold, the probability of satisfiability is close to 0. The statistical physics, computer science, and mathematics communities have focused much attention on identifying these threshold locations \cite{achlioptas2005rigorous,biere2009handbook}. 

Phase transitions for common classes of CSPs such as $k$-SAT and $k$-XORSAT are very well-studied. For $k$-SAT, researchers struggled to find tight lower bounds on the satisfiability threshold until the breakthrough work of \citeauthor{achlioptas2002asymptotic}, which provided lower bounds a constant factor away from the upper bounds. Later works closed this gap for $k$-SAT \cite{coja2013going,coja2016asymptotic}.  More recently, \citeauthor{Dudek2016CombiningTK} \shortcite{Dudek2016CombiningTK} also studied the satisfiability threshold for a more general CSP class, namely $k$-CNF-XOR, where both $k$-SAT and $k$-XORSAT constraints can be used. The results and analyses from these works, however, are all specific to the constraint classes studied.

In this paper, we provide new lower bounds on the location of the satisfiability threshold that hold for general boolean CSP classes. We focus on the setting where CSPs are generated by a single constraint type, though our analysis can extend to the setting with uniform mixtures of different constraint functions. We extend techniques from \cite{achlioptas2004threshold} and build on \cite{creignou2003generalized}, which proposes a distributional model for generating CSPs and provides lower bounds on the satisfiability threshold for these models. The significance of our work is that \textit{our bounds hold for all functions that could be used to generate random CSP instances}. The lower bounds from \cite{creignou2003generalized} are also broadly applicable, but they are looser than ours because they do not depend on constraint-specific properties. Our lower bounds are often tight (within a constant factor of upper bounds for many CSP classes) because they depend on specific properties of the Fourier spectrum of the function used to generate the random CSPs. Since these properties are simple to compute for any constraint function, our lower bounds are broadly applicable too. 

The Fourier analysis of boolean functions \cite{o2014analysis} will be vital for obtaining our main results. Expressing functions in the Fourier basis allows for clean analyses of random constraints \cite{friedgut1999sharp,barak2015beating,achim2016beyond}. Our use of Fourier analysis is inspired by the work of \citeauthor{achim2016beyond} \shortcite{achim2016beyond}, who analyze the Fourier spectra of random hash functions used as constraints in CSP-based model counting. We show that the Fourier spectrum of our constraint-generating function controls the level of spatial correlation in the set of satisfying assignments to the random CSP. If the Fourier spectrum is concentrated on first and second order coefficients (corresponding to ``low frequencies''), this correlation will be very high, roughly increasing the variance of the number of solutions to a random CSP and decreasing the probability of satisfiability. In related work, \citeauthor{montanari2011reconstruction} \shortcite{montanari2011reconstruction} also use Fourier analysis to provide tight thresholds in the case where odd Fourier coefficients are all zero. 

\section{Notation and Preliminaries}
\label{sec:notation}
In this section, we will introduce the preliminaries necessary for presenting our main theorem. First, we formally define our distribution for generating random CSPs, inspired from \cite{creignou2003generalized}. 

We will use $n$ and $m$ to denote the number of variables and number of constraints in our CSPs, respectively. We will also let $f : \{-1, 1\}^k \rightarrow \{0, 1\}$ denote a binary function and refer to $f$ as our constraint function. Often we use the term ``solution set of $f$'' to refer to the set $\{u : f(u) = 1\}$. Using the constraint function $f$, we create constraints by applying $f$ to a signed subset of $k$ variables. 

\begin{definition} [Constraint]
Let $I = (i_1, \ldots, i_k)$ be an ordered tuple of $k$ indices in $[n]$, and let $s$ be a sign vector from $\{-1, 1\}^k$. Given a vector $\sigma \in \{-1, 1\}^n$, we will define the vector $\sigma_{I, s}$ of size $k$ as follows: 
\begin{equation}
\label{eq:sigmaIs}
\sigma_{I, s} = (s_1\sigma_{i_1}, \ldots, s_k\sigma_{i_k})
\end{equation}
Now we can denote the application of $f$ to these indices by $f_{I, s}(\sigma) = f(\sigma_{I, s})$. We call $f_{I, s}$ a constraint, and we say that $\sigma \in \{-1, 1\}^n$ satisfies the constraint $f_{I, s}$ if $f_{I, s}(\sigma) = 1$.
\end{definition}
The definition of a CSP generated from $f$ follows. 
\begin{definition} [CSP generated from $f$]
We will represent a CSP with $m$ constraints and $n$ variables generated from $f$ as a collection of constraints $C_f(n, m) = \{f_{I_1, s_1}, \ldots, f_{I_m, s_m}\}$. Then $\sigma \in \{-1, 1\}^n$ satisfies $C_f(n, m)$ if $\sigma$ satisfies $f_{I_j, s_j}$ for $j = 1, \ldots, m$.
\end{definition}
\begin{example} [3-SAT]
Let $f : \{-1, 1\}^3 \rightarrow \{1, 0\}$ where $f(u) = 0$ for $u = (-1, -1, -1)$ and $f(u) = 1$ for all other $u$. Then $f$ is the constraint function for the 3-SAT problem.
\end{example}
With these basic definitions in place, we are ready to introduce the model for random CSPs. 
\subsection{Random CSPs}
\label{sec:randcsp}
We discuss our model for randomly generating  $C_f(n, m)$, and formally define a ``satisfiability threshold.''

To generate instances of $C_f(n, m)$, we simply choose $I_1, \ldots, I_m$ and $s_1, \ldots, s_m$ uniformly at random. For completeness, we sample without replacement, i.e. there are no repeated variables in a constraint, and no duplicate constraints. However, sampling with replacement does not affect final results. For the rest of this paper we abuse notation and let $C_f(n, m)$ denote a randomly generated CSP instance following this model. 

Now we formally discuss satisfiability thresholds. We let $r = m/n$. For many constraint functions $f$, there exist thresholds $r_{\sat}$ and $r_{\unsat}$ such that 
\begin{align*}
\lim_{n \rightarrow \infty} \Pr[C_f(n, rn) \ \text{is satisfiable}] = 
\begin{cases}
1 \ \text{if} &\ r < r_{\sat}\\
0 \ \text{if} &\ r > r_{\unsat}
\end{cases}
\end{align*}
In general, it is unknown whether $r_{\sat} = r_{\unsat}$, but for some problems such as $k$-SAT (for large $k$) and $k$-XORSAT, affirmative results exist \cite{ding2015proof,pittel2016satisfiability}. If $r_{\sat} = r_{\unsat}$, then we say that the random CSP $C_f(n, m)$ exhibits a sharp threshold in $m/n$. 

We are concerned with finding lower bounds $r_{\low}$ on $r_{\unsat}$ such that there exists a constant $C > 0$ independent of $n$ so that for sufficiently large $n$, 
\begin{equation}
\label{eq:rflower}
\Pr[C_f(n, rn) \ \text{is satisfiable}] > C \ \text{for all} \ r < r_{\low}
\end{equation}
For CSP classes with a sharp threshold, \eqref{eq:rflower} implies that 
\begin{align*}
\lim_{n \rightarrow \infty} \Pr[C_f(n, rn) \ \text{is satisfiable}] = 1 \ \text{for all} \ r < r_{\low}
\end{align*}
We also wish to find upper bounds $r_{\up}$ such that 
\begin{align*}
%\label{eq:rfupper}
\lim_{n \rightarrow \infty} \Pr[C_f(n, rn) \ \text{is satisfiable}] = 0 \ \text{for all} \ r > r_{\up}
\end{align*}
For an example of these quantities instantiated on a concrete example, refer to the experiments in Section \ref{sec:experiments}.

We provide a value for $r_{\up}$ which was derived earlier in \cite{creignou2003generalized}. \citeauthor{dubois2001upper} \shortcite{dubois2001upper} and \citeauthor{creignou2007expected} \shortcite{creignou2007expected} provide methods for obtaining tighter upper bounds, but the looser values that we use are sufficient for showing that $r_{\low}$ is on the same asymptotic order as $r_{\unsat}$ for many choices of $f$. 

The bound $r_{\low}$ depends on both the symmetry and size of the solution set of $f$. The more assignments $u \in \{-1, 1\}^k$ such that $f(u) = 1$, the more likely it is that each constraint is satisfied. Increased symmetry reduces the variance in the number of solutions to $C_{f}(n, rn)$, so solutions are more spread out among possible CSPs in our class and the probability that $C_f(n, rn)$ will have a solution is higher. We will formally quantify this symmetry in terms of the Fourier spectrum of $f$, which we introduce next. 

\subsection{Fourier Expansion of Boolean Functions}
We discuss basics of Fourier analysis of boolean functions. For a detailed review, refer to \cite{o2014analysis}. We define the vector space $\mathcal{F}_k$ of all functions mapping $\{-1, 1\}^k$ to $\mathbb{R}$. The set $\mathcal{F}_k$ has the inner product $\langle f_1, f_2 \rangle = \sum_{u \in \{-1, 1\}^k} f_1(u)f_2(u)/2^k$ for any $f_1, f_2 \in \mathcal{F}_k$.

This inner product space has orthonormal basis vectors $\chi_S$, where the parity functions $\chi_S$ follow $\chi_S(u) = \prod_{i \in S} u_i$ for all $S \subseteq [k]$, subsets of the $k$ indices. Because $(\chi_S)_{S \subseteq [k]}$ forms an orthonormal basis, if we write 
\begin{equation}
\label{eq:hatfS}
\hat{f}(S) = \langle f, \chi_S \rangle = \frac{1}{2^k} \sum_{u \in \{-1, 1\}^k} f(u)\chi_S(u)
\end{equation}
then we can write $f$ as a linear combination of these vectors: $f = \sum_{S \subseteq [k]} \hat{f}(S) \chi_S$. We note that when $S = \emptyset$, the empty set, $\hat{f}(\emptyset)$ is simply the average of $f$ over $\{-1, 1\}^k$. We will refer to the coefficients $(\hat{f}(S))_{S \subseteq [k]}$ as the Fourier spectrum of $f$. Since these coefficients are well-studied in theoretical computer science \cite{o2014analysis}, the Fourier coefficients of many boolean functions are easily obtained.

\begin{example} [3-SAT] 
\label{ex:3satfourier}
For $3$-SAT, $\hat{f}(\emptyset) = 7/8$. $\hat{f}(\{1\}) = \hat{f}(\{2\}) = \hat{f}(\{3\}) = 1/8$, and $\hat{f}(\{1, 2\}) = \hat{f}(\{2, 3\}) = \hat{f}(\{1, 3\}) = -1/8$.
\end{example}

Representing $f$ in the Fourier bases will facilitate our proofs, providing a simple way to express expectations over our random CSPs. The Fourier spectrum can also provide a measure of ``symmetry'' in $f$ - if some values of $\hat{f}(S)$ are high where $|S| = 1$, then satisfying assignments to $f$ are more skewed in the variable corresponding to $S$. We will show how this impacts satisfiability in Section \ref{sec:boundexplained}.

\begin{figure*}
\centering
\begin{tabular}{|c | c | c| c|}
\hline
\Tstrut
CSP class ($f$) & Best lower bound on $r_{\unsat}$ & Our bound $r_{\low}$ & Upper bound $r_{\up}$\\ \hline
\Tstrut
$k$-XORSAT & 1 & $\frac{1}{2}$ & 1\\
\Tstrut
$k$-SAT & $2^k\ln 2 - \frac{1 + \ln 2}{2} - o_k(1)$ & $2^{k - 1} - O(k)$ & $2^k \ln 2$\\ 
\Tstrut
$k$-NAESAT & $2^{k - 1} \ln 2 - \frac{\ln 2}{2} - \frac{1}{4} - o_k(1)$ & $2^{k - 2} - \frac{1}{2}$ & $2^{k - 1}\ln 2$ \\
\Tstrut
$k$-MAJORITY & ? & $\frac{\frac{1}{2} - k\binom{k - 1} {\frac{k - 1}{2}}^2 2^{-2k +1}}{1 + k\binom{k - 1}{\frac{k - 1}{2}}^2 2^{-2k+ 2}} = 0.111 - o_a(1)$ & $1$  \\
\Tstrut
$a$-MAJ $\otimes 3$-MAJ & ? & $\frac{\frac{1}{2} - 3a\binom{a - 1}{\frac{a - 1}{2}}^2 2^{-2a - 1}}{1 + 3a\binom{a - 1}{\frac{a - 1}{2}}^2 2^{-2a - 2}} = 0.177 - o_a(1)$ & $1$\\
\Tstrut
$k$-MOD-3 & ? & $\frac{1}{4} - o_k(1)$ & $\frac{\ln 2}{\ln 3} + o_k(1)$\\
\Tstrut
OR$_b \otimes$ XOR$_a$ & ? & $2^{b - 1} - 1/2$ & $2^{b - 1} \ln 2$\\
\hline
\end{tabular}
\caption{We compare the best known lower bounds on the satisfiability threshold to our lower and upper bounds. For $k$-XORSAT \cite{pittel2016satisfiability}, $k$-SAT \cite{ding2015proof}, and $k$-NAESAT \cite{coja2012catching}, the numbers listed are known as exact sharp threshold locations. For the last four, we do not know of existing lower bounds. $\otimes$ is the composition operator for boolean functions, and we define these functions in Section \ref{sec:figfunc}.}
\label{fig:bounds}
\end{figure*}

\section{Main Results}
\label{sec:mainresult}
We provide simple formula for $r_{\up}$. A similar result is in \cite{creignou2003generalized}, and the full proof is in the appendix.
\begin{proposition}
\label{prop:upper}
For all constraint functions $f$, let 
\begin{align*}
r_{\up} = \frac{\log 2}{\log 1/\hat{f}(\emptyset)} < \frac{\log 2}{1 - \hat{f}(\emptyset)}
\end{align*}
If $r \ge r_{\up}$, $\lim_{n \rightarrow \infty} \Pr[C_f(n, rn) \ \text{is satisfiable}] = 0$. 
\end{proposition}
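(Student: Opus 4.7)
The natural plan is a first moment argument. Let $X$ denote the number of $\sigma \in \{-1,1\}^n$ that satisfy $C_f(n,rn)$, so that by Markov's inequality
$$\Pr[C_f(n,rn) \text{ is satisfiable}] = \Pr[X \geq 1] \leq \E[X].$$
It suffices to show $\E[X] \to 0$ whenever $r > r_{\up}$, and to verify the stated inequality between the two expressions for $r_{\up}$.

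The key step is to compute, for a fixed assignment $\sigma \in \{-1,1\}^n$, the probability that $\sigma$ satisfies a single randomly generated constraint $f_{I,s}$. I would argue that $\sigma_{I,s}$ is uniformly distributed on $\{-1,1\}^k$: since $s$ is drawn uniformly from $\{-1,1\}^k$ independently of $I$, each coordinate $s_j \sigma_{i_j}$ is a uniform $\pm 1$ random variable, and independence across $j$ follows from the product structure of the distribution of $s$ (here the sampling of $I$ without replacement is harmless because only the signs $s$ matter for this marginal). Consequently, $\Pr[\sigma \text{ satisfies } f_{I,s}] = 2^{-k}|\{u : f(u) = 1\}| = \hat{f}(\emptyset)$, recognizing the empty-set Fourier coefficient as the average of $f$.

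Next I would use linearity of expectation together with the fact that this per-constraint probability does not depend on $\sigma$. Treating the $m = rn$ constraints as independent (as the paper observes, replacement is immaterial here; alternatively one can work with the exact hypergeometric formula and bound it by $\hat{f}(\emptyset)^m$), this yields
$$\E[X] = \sum_{\sigma \in \{-1,1\}^n} \hat{f}(\emptyset)^{rn} = \left(2 \cdot \hat{f}(\emptyset)^r\right)^n.$$
This quantity tends to $0$ exactly when $2\hat{f}(\emptyset)^r < 1$, i.e.\ when $r \log(1/\hat{f}(\emptyset)) > \log 2$, which is precisely $r > \log 2 / \log(1/\hat{f}(\emptyset))$. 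Combined with Markov this establishes the main conclusion.

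Finally I would justify the displayed inequality $\log 2 / \log(1/\hat{f}(\emptyset)) < \log 2 / (1 - \hat{f}(\emptyset))$ by the elementary bound $-\log x > 1 - x$ for $x \in (0,1)$, which follows from the standard inequality $\log x \leq x - 1$ with equality only at $x = 1$ (the case $\hat{f}(\emptyset) = 1$ corresponds to the trivial constraint and can be excluded). None of the steps is especially subtle; the only mild obstacle is being careful about the without-replacement sampling, but since we only need an upper bound on $\E[X]$ and the correction factor is $1 + o(1)$, this does not affect the threshold.
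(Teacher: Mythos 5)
Your proposal is correct and follows essentially the same route as the paper: a first moment argument letting $X$ be the number of satisfying assignments, computing $\E[X] = (2\hat{f}(\emptyset)^r)^n$ via the uniformity of $\sigma_{I,s}$ on $\{-1,1\}^k$, and concluding by Markov's inequality. The paper simply cites its Lemma~\ref{lem:firstmomentsq} for the first moment (which is proved by exactly the uniformity argument you give inline), and it leaves the inequality $\log(1/\hat{f}(\emptyset)) > 1 - \hat{f}(\emptyset)$ unstated; you correctly justify it via $\log x \le x - 1$.
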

\begin{proof} [Proof Sketch]
We compute the expected solution count for $C_f(n, rn)$. The expected solution count will scale with $\hat{f}(\emptyset)$, since  $2^k \hat{f}(\emptyset)$ is simply the number of $u \in \{-1, 1\}^k$ where $f(u) = 1$ and therefore governs how easily each constraint will be satisfied. If $r > r_{\up}$, the expected solution count converges to 0 as $n \rightarrow \infty$, so Markov's inequality implies that the probability that a solution exists goes to 0.
\end{proof}

Next, we will present our value for $r_{\low}$. First, some notation: let $U = \{u \in \{-1, 1\}^k : f(u) = 1\}$, and let $A$ be the $k \times |U|$ matrix whose columns are the elements of $u$. We will use $A^+$ to denote the Moore-Penrose pseudoinverse of $A$. For a reference on this, see \cite{barata2012moore}. Finally, let $\one$ be the $|U|$-dimensional vector of 1's.
\begin{example} [$3$-SAT] For $3$-SAT, $k = 3$ and $|U| = 7$, and we can write $A$ as follows (up to permutation of its columns):
\begin{align*}
A = 
\left[
\begin{array} {r r r r r r r}
1 & 1 & 1 & 1 & -1 & -1 & -1\\
1 & 1 & -1 & -1 & 1 & 1 & -1\\
1 & -1 & 1 & -1 & 1 & -1 & 1
\end{array}
\right]
\end{align*}
where columns of $A$ satisfy the $3$-SAT constraint function.
\end{example}
The following main theorem provides the \textit{first computable equation for obtaining lower bounds that are specific to the constraint-generating function $f$.} 

\begin{theorem}
\label{thm:main}
For all constraint functions $f$, let 
\begin{align*}
r_{\low} = \frac{1}{2} \frac{c}{1 - c} \text{ where } c = \hat{f}(\emptyset) - \frac{\one^T A^+ A \one}{2^k}
\end{align*}
If $r < r_{\low}$, then there exists a constant $C>0$ such that $\lim_{n \rightarrow \infty} \Pr[C_f(n, rn) \ \text{is satisfiable}] > C$. 
\end{theorem}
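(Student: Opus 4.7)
The plan is a weighted second moment method. I introduce a real-valued weight $w: \{-1,1\}^k \to \mathbb{R}$ supported on $U$ (so $w(u) = 0$ for $u \notin U$) whose Fourier expansion satisfies $\hat{w}(\emptyset) = c$ and $\hat{w}(\{i\}) = 0$ for every $i \in [k]$. Setting $W = \sum_{\sigma \in \{-1,1\}^n} \prod_{j=1}^m w(\sigma_{I_j, s_j})$, every term in which $\sigma$ violates some constraint contributes $0$, so $W$ is a signed sum over the satisfying assignments of the random CSP and in particular vanishes whenever the CSP is unsatisfiable. Since $W = W \cdot \1_{W \neq 0}$, Cauchy--Schwarz gives $\E[W]^2 \le \E[W^2] \cdot \Pr[W \neq 0]$, so
\begin{equation*}
\Pr[C_f(n, m) \text{ is satisfiable}] \ge \Pr[W \neq 0] \ge \frac{\E[W]^2}{\E[W^2]},
\end{equation*}
and it suffices to lower-bound this ratio by a positive constant. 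Crucially, this avoids the need for $w$ to be pointwise nonnegative.

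To construct $w$, I take the orthogonal projection $w = \one - A^+ A \one \in \mathbb{R}^{|U|}$ of $\one$ onto $\ker A$, extended to $\{-1,1\}^k$ by zero outside $U$. Using idempotency of the projection, $\one^T w = \|w\|^2 = |U| - \one^T A^+ A \one$, so $\hat{w}(\emptyset) = \one^T w / 2^k = \hat{f}(\emptyset) - \one^T A^+ A \one / 2^k = c$. Likewise $\hat{w}(\{i\}) = (A w)_i / 2^k = 0$ because $A w = A \one - A A^+ A \one = 0$. Parseval yields $\sum_S \hat{w}(S)^2 = \|w\|^2/2^k = c$, so $\sum_{|S| \ge 2} \hat{w}(S)^2 = c - c^2 = c(1-c)$. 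By independence of the $m$ constraints, $\E[W] = 2^n \hat{w}(\emptyset)^m = 2^n c^m$.

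For the second moment, expanding each $w(\sigma_{I,s})$ in the Fourier basis and averaging over the i.i.d.\ signs $s_1, \ldots, s_k$ kills all off-diagonal pairs of subsets, producing
\begin{equation*}
\E_{I, s}[w(\sigma_{I, s}) w(\tau_{I, s})] = \sum_{S \subseteq [k]} \hat{w}(S)^2 \, \E_I\!\left[\prod_{i \in S} \sigma_{I_i} \tau_{I_i}\right].
\end{equation*}
Under the (asymptotically equivalent) sampling-with-replacement model, the inner expectation equals $\rho^{|S|}$ where $\rho = n^{-1}\sigma^T\tau$ is the normalized overlap. The $|S| = 1$ terms vanish by construction, and for $|S| \ge 2$ with $|\rho| \le 1$ we have $|\rho^{|S|}| \le \rho^2$, so $|\E_{I,s}[w(\sigma_{I,s})w(\tau_{I,s})]| \le c^2 + c(1-c)\rho^2$. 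Raising to the $m$-th power, summing over pairs $(\sigma, \tau)$, and dividing by $\E[W]^2 = 4^n c^{2m}$ yields
\begin{equation*}
\frac{\E[W^2]}{\E[W]^2} \le \E_{\sigma, \tau}\!\left[\left(1 + \gamma \rho^2\right)^m\right], \quad \gamma := \frac{1-c}{c}.
\end{equation*}

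Under uniform $(\sigma, \tau)$, $\rho$ is $n^{-1}$ times a sum of $n$ i.i.d.\ Rademacher variables, so by the CLT $\sqrt{n}\,\rho$ is asymptotically $N(0,1)$ and $n\rho^2$ is asymptotically $\chi_1^2$. Approximating $(1 + \gamma\rho^2)^{rn} \approx e^{r\gamma n \rho^2}$ and using the $\chi_1^2$ moment generating function $(1-2t)^{-1/2}$, valid for $t < 1/2$, the expectation tends to $(1 - 2r\gamma)^{-1/2}$ precisely when $r\gamma < 1/2$, i.e., $r < c/(2(1-c)) = r_{\low}$. The hardest part will be making this Laplace-type analysis fully rigorous: one has to stitch together the Gaussian regime $|\rho| = O(1/\sqrt{n})$ with large-deviation estimates for the tails $|\rho| = \Omega(1)$, and verify that the corrections coming from higher-order Fourier terms, from the $\log(1+x) - x = O(x^2)$ error, and from sampling without replacement all contribute only $o(1)$. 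Once this is done, $\E[W^2]/\E[W]^2$ is bounded above by a finite constant, giving a strictly positive lower bound on $\Pr[C_f(n, rn) \text{ is satisfiable}]$, as required.
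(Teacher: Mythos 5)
Your high-level strategy---a weighted second moment with a sign-indefinite weight supported on the solution set---is exactly the paper's, and you have independently rediscovered its optimal weight function. Your $w = (I - A^+A)\one$ on $U$, zero elsewhere, is precisely the solution to the optimization problem the paper works out in its appendix (Lemma~\ref{lem:optsolve}), and your Fourier-coefficient verifications ($\hat{w}(\emptyset) = c$, $\hat{w}(\{i\}) = 0$, $\sum_{|S|\ge 2}\hat{w}(S)^2 = c(1-c)$ via Parseval) are correct. The correlation calculation $\E_{I,s}[w(\sigma_{I,s})w(\tau_{I,s})] = \sum_S \hat{w}(S)^2\rho^{|S|}$ is the paper's Lemma~\ref{lem:secmoment} in the notation $g_w(\alpha)$ with $\rho = 2\alpha - 1$, and your pointwise bound $|g_w(\alpha)| \le c^2 + c(1-c)\rho^2$ neatly sidesteps the negativity issue that forces the paper to introduce the symmetrized surrogate $\phi_r^*$.

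Where you diverge---and where the real gap lies---is the final Laplace-type estimate. The paper applies a ready-made lemma (its Lemma~\ref{lem:achlioptaslem}, from Achlioptas and Peres) after verifying in Lemma~\ref{lem:rbound} that $\psi_r(\alpha) = g_w(\alpha)^r/(\alpha^\alpha(1-\alpha)^{1-\alpha})$ has a strict, nondegenerate maximum at $\alpha = 1/2$ under the stated bound on $r$. You propose to prove the same estimate from scratch with a CLT argument in the bulk and large-deviation control of the tails, and you openly flag that this is unfinished. That flag is warranted: the tail estimate is exactly where the threshold $r\gamma < 1/2$ enters. To close it you would use $\log(1+x) \le x$ together with the relative-entropy-dominates-$\chi^2$ inequality $\log 2 + \alpha\log\alpha + (1-\alpha)\log(1-\alpha) \ge 2(\alpha - 1/2)^2$, which is the content of the paper's monotonicity argument for $h(\alpha)$ in Lemma~\ref{lem:rbound}. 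The bulk regime also needs a concrete Riemann-sum or uniform-integrability argument---distributional convergence of $n\rho^2$ to $\chi_1^2$ does not by itself give convergence of $\E[e^{r\gamma n\rho^2}]$. Nothing here is conceptually wrong and the pieces do fit together, but as written the proposal stops precisely where the paper's Lemmas~\ref{lem:achlioptaslem} and~\ref{lem:rbound} do the quantitative work.
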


The lower bound $r_{\low}$ is an increasing function of $c$ which is dependent on two quantities. First, with higher values of $\hat{f}(\emptyset)$, $C_f(n, rn)$ will have more satisfying assignments on average, so $c$ and the threshold value will be higher. Second, $c$ depends on the level of symmetry in the solution set of $f$, which we will show is connected to the Fourier spectrum of $f$. We explain this dependence in Section \ref{sec:boundexplained}. 

In comparison, \citeauthor{creignou2003generalized} \shortcite{creignou2003generalized} obtain lower bounds which depend only on the arity of $f$. While we cannot make an exact comparison because \citeauthor{creignou2003generalized} use a different random CSP ensemble, for reference, they provide the general lower bound of $1/(ke^k - k)$ expected constraints per variable for functions of arity $k$. Our bounds are much tighter because of their specificity while remaining simple to compute. To demonstrate, we instantiate our bounds for some example constraint functions in Figure \ref{fig:bounds}. Whereas their bounds are exponentially decreasing in $k$, our bounds are constant or increasing in $k$ for the functions shown. 

\subsection{Constraint Functions in Figure \ref{fig:bounds}}
\label{sec:figfunc}
We define the constraint functions in Figure \ref{fig:bounds}. Unless specified otherwise, they will be in the form $f : \{-1, 1\}^{k} \rightarrow \{0, 1\}$. 
\begin{enumerate}
\item $k$-SAT: $f(u) = 0$ if $u$ is the all negative ones vector, and $f(u) = 1$ otherwise.
\item $k$-XORSAT: $f(u) = \1(\chi_{[k]}(u) = -1)$
\item $k$-NAESAT: $f(u) = 0$ if $u$ is the all negative ones or all ones vector, and $f(u) = 1$ otherwise.
\item $k$-MAJORITY: Defined when $k$ is odd, $f(u) = 1$ if more than half of the variables of $u$ are $1$.
\item $a$-MAJ $\otimes$ $3$-MAJ: Defined when $a$ is odd, where $f : \{-1, 1\}^{3a} \rightarrow \{0, 1\}$. Defined as the composition of $a$-MAJORITY on $a$ groups of $3$-MAJORITY, as follows: 
\begin{align*}
f(u_1, \ldots, u_{3a}) =\\ f_{a-\text{MAJ}}(f_{3-\text{MAJ}}(u_1, u_2, u_3), \ldots, \\ f_{3-\text{MAJ}}(u_{3a - 2}, u_{3a -1}, u_{3a}))
\end{align*}
\item $k$-MOD-3: $f(u) = 1$ when the number of 1's in $u$ is divisible by $3$, and 0 otherwise.
\item OR$_b \otimes$ XOR$_a$: In this case, $f : \{-1, 1\}^{ab} \rightarrow \{0, 1\}$, and $f$ is the composition of a OR over $b$ groups of XORs over $a$ variables, as follows: 
\begin{align*}
f(u_{1}, \ldots, u_{ab}) =\\
f_{\text{OR}_b}(f_{\text{XOR}_a}(u_1, \ldots, u_a), \ldots,\\
f_{\text{XOR}_{a}}(u_{ab - a + 1}, \ldots, u_{ab}))
\end{align*}
\end{enumerate}

While the last four constraint functions have not been analyzed much in the existing CSP literature, these types of general constraints are of practical interest because of \cite{achim2016beyond}, which performs probabilistic inference by solving CSPs based on arbitrary hash functions. For example, \citeauthor{achim2016beyond} \shortcite{achim2016beyond} show that MAJORITY constraints are effective in practice for solving probabilistic inference problems. 
\subsection{Connecting Bounds with Fourier Spectrum}
\label{sec:boundexplained}
We explain how the Fourier spectrum can help us interpret Theorem \ref{thm:main}. We first show the connection between $c$ and the Fourier spectrum. 
Let $\hat{f}_{S : |S| = 1}$ be the $k$-dimensional vector whose entries are Fourier coefficients of $f$ for size 1 sets. Let $B$ be the $k \times k$ matrix with diagonal entries $B_{ii} = \hat{f}(\emptyset)$ and off-diagonal entries $B_{ij} = \hat{f}(\{i, j\})$ for $i \ne j$. 
\begin{example} [3-SAT]
Following the coefficients in Example \ref{ex:3satfourier}, for $3$-SAT, $\hat{f}_{S : |S| = 1} = (1/8, 1/8, 1/8)$ and 
\begin{align*}
B = \left[
\begin{array} {r r r}
7/8 & -1/8 & -1/8\\
-1/8 & 7/8 & -1/8\\
-1/8 & -1/8 & 7/8
\end{array}
\right]
\end{align*}
\end{example}
\begin{lemma}
\label{lem:fourierc}
When the rows of $A$ are linearly independent, $c = \hat{f}(\emptyset) - \hat{f}_{S : |S| = 1}^T B^{-1} \hat{f}_{S : |S| = 1}$.
\end{lemma}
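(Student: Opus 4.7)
The plan is to reduce everything to two identifications: $A\one$ equals $2^k \hat{f}_{S:|S|=1}$ (up to scaling), and $AA^T$ equals $2^k B$. Once these are established, invoking the formula for the pseudoinverse when $A$ has full row rank finishes the computation.

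First I would recall that when the rows of $A$ are linearly independent, $A$ has full row rank $k$, so $AA^T$ is invertible and the pseudoinverse satisfies $A^+ = A^T (AA^T)^{-1}$. Substituting into the defining expression,
\begin{equation*}
\one^T A^+ A \one \;=\; (A\one)^T (AA^T)^{-1} (A\one).
\end{equation*}
This reduces the lemma to computing $A\one$ and $AA^T$ in terms of Fourier coefficients.

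Next I would compute $A\one$ entrywise. The $i$-th entry of $A\one$ sums the $i$-th coordinate over the satisfying assignments: $(A\one)_i = \sum_{u \in U} u_i = \sum_{u \in \{-1,1\}^k} f(u)\, \chi_{\{i\}}(u) = 2^k \hat{f}(\{i\})$ by the definition \eqref{eq:hatfS}. Hence $A\one = 2^k \hat{f}_{S:|S|=1}$. Similarly, the $(i,j)$ entry of $AA^T$ is $\sum_{u \in U} u_i u_j$. For $i=j$ this is $|U| = 2^k \hat{f}(\emptyset)$, and for $i \ne j$ it equals $\sum_{u} f(u) \chi_{\{i,j\}}(u) = 2^k \hat{f}(\{i,j\})$. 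Comparing with the definition of $B$ in the statement, we read off $AA^T = 2^k B$, and in particular $B$ is invertible under our hypothesis.

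Plugging these identifications back in,
\begin{equation*}
\one^T A^+ A \one = (2^k \hat{f}_{S:|S|=1})^T (2^k B)^{-1} (2^k \hat{f}_{S:|S|=1}) = 2^k\, \hat{f}_{S:|S|=1}^T B^{-1} \hat{f}_{S:|S|=1}.
\end{equation*}
Dividing by $2^k$ and substituting into the definition $c = \hat{f}(\emptyset) - \one^T A^+ A \one / 2^k$ gives exactly $c = \hat{f}(\emptyset) - \hat{f}_{S:|S|=1}^T B^{-1} \hat{f}_{S:|S|=1}$. There is no real obstacle here; the only subtlety worth spelling out is why the linear-independence hypothesis on the rows of $A$ is both necessary and sufficient for the substitution $A^+ = A^T(AA^T)^{-1}$ to be legitimate, which follows from the standard characterization of the Moore-Penrose pseudoinverse for full-row-rank matrices.
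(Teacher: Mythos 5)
Your proof is correct and follows essentially the same approach as the paper's: invoke $A^+ = A^T(AA^T)^{-1}$ for a full-row-rank matrix, then identify $A\one = 2^k \hat{f}_{S:|S|=1}$ and $AA^T = 2^kB$. You simply spell out the entrywise verification of those two identities, which the paper states without derivation.
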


From this lemma, we see that larger values of $c$ correspond to smaller $\hat{f}_{S : |S| = 1}$. These terms will measure the amount of ``symmetry'' in the solution set for $f$. 
The matrix $B$ and vectors $\hat{f}_{S : |S| = 1}$ are easily obtained for many $f$ since Fourier coefficients are well-studied \cite{o2014analysis}.  

Figure \ref{fig:bounds} shows how $\hat{f}(\emptyset) - c$ and $f$ relate. Since $k$-SAT has a mostly symmetric solution set, $\hat{f}_{k-\text{SAT}}(\emptyset) - c = O(k/2^{2k})$ since $c = 1 - 2^{-k} - O(k/2^{2k})$, which is small compared to $\hat{f}_{k-\text{SAT}}(\emptyset)$. The solution set of $k$-NAESAT is completely symmetric as if $f_{k-\text{NAESAT}}(x) = 1$, then $f_{k-\text{NAESAT}}(-x) = 1$. Thus, $f_{k-\text{NAESAT}}$ has 0 weight on Fourier coefficients for sets with odd size so we can compute that $\hat{f}_{k-\text{NAESAT}}(\emptyset) - c = 0$. $k$-MAJORITY, however, is less symmetric, as shown by larger first order coefficients. Here, the bound in Figure \ref{fig:bounds} gives 
$\lim_{k \rightarrow \infty} \hat{f}_{k-\text{MAJORITY}}(\emptyset) - c = 1/\pi$, which is large compared to $\hat{f}_{k-\text{MAJORITY}}(\emptyset) \approx 1/2$.

\section{Proof Strategy}
Our proof relies on the second moment method, which has been applied with great success to achieve lower bounds for problems such as $k$-SAT \cite{achlioptas2004threshold} and $k$-XORSAT \cite{dubois20023}. The second moment method is based on the following lemma, which can be derived using the Cauchy-Schwarz inequality: 
\begin{lemma}
\label{lem:secondmomentmethod}
Let $X$ be any real-valued random variable. Then
\begin {equation}
\label{eq:secondmom}
Pr[X \ne 0] = \Pr[|X| \ne 0] \ge \frac{\E[|X|]^2}{\E[X^2]} \ge \frac{\E[X]^2}{\E[X^2]}
\end {equation}
\end{lemma}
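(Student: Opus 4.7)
The plan is to establish the three relations in Lemma \ref{lem:secondmomentmethod} in order, since each step is elementary. First, the equality $\Pr[X \ne 0] = \Pr[|X| \ne 0]$ is immediate because the event $\{X = 0\}$ coincides with $\{|X| = 0\}$, so this requires no real argument.

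For the first nontrivial inequality $\Pr[X \ne 0] \ge \E[|X|]^2 / \E[X^2]$, the key observation is that $|X| = |X| \cdot \1\{X \ne 0\}$ pointwise, since $|X|$ vanishes on $\{X = 0\}$. I would then apply the Cauchy--Schwarz inequality to this product of random variables, obtaining
\begin{equation*}
\E[|X|]^2 \;=\; \E\bigl[|X| \cdot \1\{X \ne 0\}\bigr]^2 \;\le\; \E[X^2] \cdot \E\bigl[\1\{X \ne 0\}^2\bigr] \;=\; \E[X^2] \cdot \Pr[X \ne 0].
\end{equation*}
Rearranging gives the desired bound, provided $\E[X^2] > 0$. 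The degenerate case $\E[X^2] = 0$ forces $X = 0$ almost surely, making the inequality vacuous and handled separately.

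For the final inequality $\E[|X|]^2 \ge \E[X]^2$, I would invoke Jensen's inequality: since $|\cdot|$ is convex, $\E[|X|] \ge |\E[X]|$, and squaring preserves the ordering between nonnegative quantities. Equivalently, this follows from the triangle inequality for expectations. Concatenating the three steps yields the full chain in \eqref{eq:secondmom}. There is no substantive obstacle here, as the lemma is a textbook application of Cauchy--Schwarz; the only subtlety is the vanishing-second-moment edge case, which is dispatched as noted above.
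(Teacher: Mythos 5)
Your proof is correct and takes exactly the approach the paper alludes to: the paper states only that the lemma ``can be derived using the Cauchy--Schwarz inequality'' and gives no further detail, and your argument---factoring $|X| = |X|\cdot\1\{X\ne 0\}$, applying Cauchy--Schwarz to get $\E[|X|]^2 \le \E[X^2]\Pr[X\ne 0]$, then finishing with $\E[|X|] \ge |\E[X]|$---is the standard derivation the authors had in mind. The edge-case note on $\E[X^2]=0$ is a reasonable precaution.
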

If $X$ is only nonzero when $C_f(n, rn)$ has a solution, we obtain lower bounds on the probability that a solution exists by upper bounding $\E[X^2]$. For example, we could let $X$ be the number of solutions to $C_f(n, rn)$. However, as shown in \cite{achlioptas2004threshold}, this choice of $X$ fails in most cases. Whether two different assignments satisfy $C_f(n, rn)$ is correlated: if the assignments are close in Hamming distance and one assignment is satisfying, it is more likely that the other is satisfying as well. This will make $\E[X^2]$ much larger than $\E[X]^2$, so \eqref{eq:secondmom} will not provide useful information. Figure \ref{fig:failure} demonstrates this failure for $k$-SAT. \citeauthor{achlioptas2004threshold} \shortcite{achlioptas2004threshold} show formally that the ratio $\E[X]^2/\E[X^2]$ will decrease exponentially (albeit at a slow rate). On the other hand, $k$-NAESAT is ``symmetric'', so the second moment method works directly here. In the plot, $\E[X]^2/\E[X^2]$ for $3$-NAESAT stays above a constant. This also follows formally from our main theorem as well as \cite{achlioptas2002asymptotic}. We formally define our requirements on symmetry in \eqref{eq:derivhalf0}.  

We circumvent this issue by weighting solutions to reduce correlations before applying the second moment method. As in \cite{achlioptas2004threshold}, we use a weighting which factors over constraints in $C_f(n, rn)$ and apply the second moment method to the random variable 
\begin{equation}
\label{eq:X}
X = \sum_{\sigma \in \{-1, 1\}^n} \prod_{c \in C_f(n, rn)}w(\sigma, c)
\end{equation}
where $C_f(n, rn)$ is a collection of constraints $\{f_{I_1, s_1}, \ldots, f_{I_m, s_m}\}$ and the randomness in $X$ comes over the choices of $I_j, s_j$. Now we can restrict our attention to constraint weightings of the form $w(\sigma, f_{I, s}) = w(\sigma_{I, s})$. In the special case where $w(\sigma_{I, s}) = f(\sigma_{I, s})$, $X$ will simply represent the number of solutions to $C_f(n, rn)$. In general, we require $w(\sigma_{I, s}) = 0$ whenever $f(\sigma_{I, s}) = 0$. This way, if $X \ne 0$, then $C_f(n, rn)$ must have a solution.  

For convenience, we assume that the index sets $I_1, \ldots, I_m$ are sampled with replacement. They are chosen uniformly from $[n]^k$. We also allow constraints to be identical. In the appendix, we justify why proofs in this setting carry over to the without-replacement setting in Section \ref{sec:randcsp} and also provide full proofs to the lemmas presented below.  

\begin{figure*}
\centering
\subfloat[$\E{[X]}^2/\E{[X^2]}$ vs. $n$ for $3$-SAT and $3$-NAESAT. $X$ is the solution count, $r = 1$.]
{\includegraphics[width=0.32\textwidth]{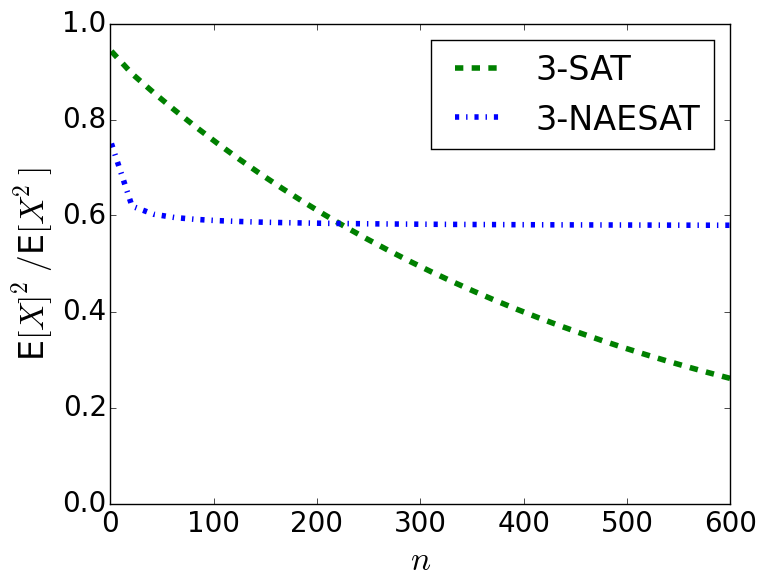}
\label{fig:failure}
}
\subfloat[$g_w(\alpha)$ when $f$ is $k$-XORSAT and $w = f$.]{\includegraphics[width=0.32\textwidth]{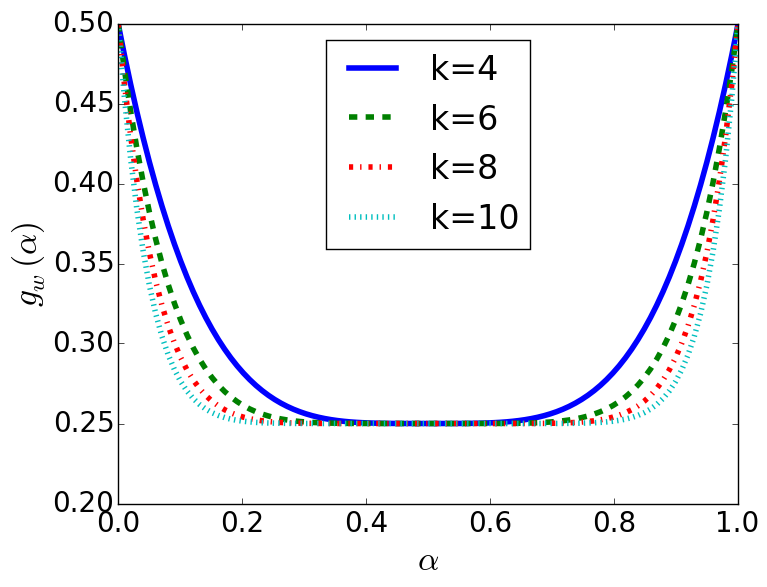}
\label{fig:g_w}}
\subfloat[$\psi_r(\alpha)$ with $f$ as $5$-NAESAT and $w = f$.]{\includegraphics[width=0.32\textwidth]{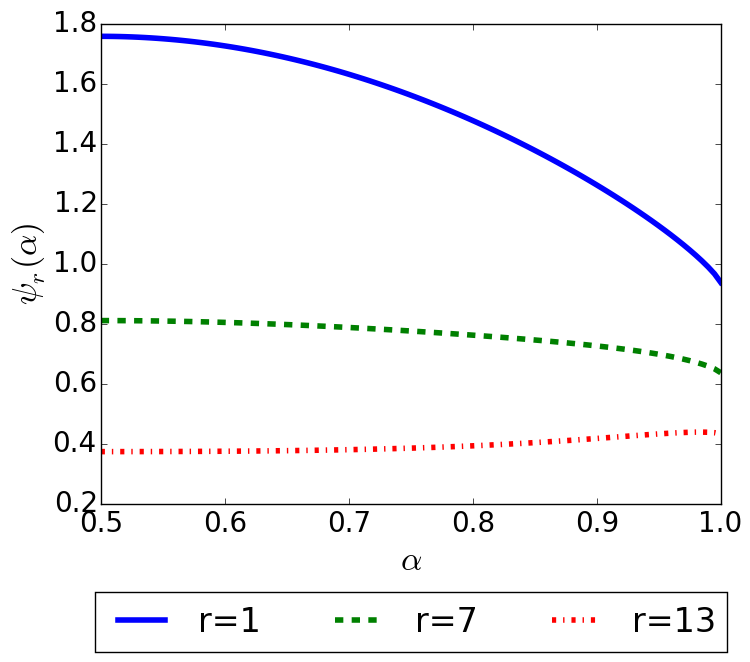}
\label{fig:psi}}
\caption{For concreteness, we provide sample plots of the relevant quantities in our proofs.}
\end{figure*}

In this setting, we will compute the first and second moments of the $X$ chosen in \eqref{eq:X} in terms of the Fourier spectrum of $w$.
\begin{lemma}
\label{lem:firstmomentsq}
The squared first moment of $X$ is given by 
\begin{equation}
\label{eq:E[X]sq}
\E[X]^2 = 2^{2n}(\hat{w}(\emptyset)^2)^{rn}
\end{equation}
\end{lemma}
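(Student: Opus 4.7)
The plan is to compute $\E[X]$ directly and then square it; no fancier machinery is needed for this particular lemma. By linearity of expectation, I would pull the sum over $\sigma$ outside the expectation:
\begin{align*}
\E[X] = \sum_{\sigma \in \{-1,1\}^n} \E\left[\prod_{j=1}^{rn} w(\sigma_{I_j, s_j})\right].
\end{align*}
Since we are in the with-replacement setting declared just above the lemma, the pairs $(I_j, s_j)$ are i.i.d., so the random variables $w(\sigma_{I_j, s_j})$ are independent (for fixed $\sigma$) and the expectation of the product factors as a product of identical single-constraint expectations. Thus $\E[X] = \sum_\sigma \bigl(\E[w(\sigma_{I_1, s_1})]\bigr)^{rn}$.

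The heart of the argument is evaluating $\E[w(\sigma_{I, s})]$ for one random constraint. I would expand $w$ in the Fourier basis $w = \sum_{S \subseteq [k]} \hat{w}(S)\, \chi_S$ and use the fact that $\chi_S(\sigma_{I, s}) = \prod_{i \in S} s_i \sigma_{I_i}$, giving
\begin{align*}
\E[w(\sigma_{I, s})] = \sum_{S \subseteq [k]} \hat{w}(S)\, \E\!\left[\prod_{i \in S} s_i \sigma_{I_i}\right].
\end{align*}
The key observation is that $s$ is uniform on $\{-1,1\}^k$ and independent of $I$, so for any nonempty $S$ we have $\E[\prod_{i \in S} s_i] = 0$ by independence and sign symmetry. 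Consequently every $S \neq \emptyset$ term vanishes regardless of $\sigma$ or $I$, and only the $S = \emptyset$ contribution $\hat{w}(\emptyset)$ survives.

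Substituting back, $\E[w(\sigma_{I_1, s_1})] = \hat{w}(\emptyset)$ is a constant independent of $\sigma$, so $\E[X] = 2^n \hat{w}(\emptyset)^{rn}$ and squaring yields $\E[X]^2 = 2^{2n}(\hat{w}(\emptyset)^2)^{rn}$, as claimed. There is no real obstacle in this proof — it is essentially a warm-up Fourier calculation where uniform random signs annihilate all non-trivial characters. The genuinely difficult analysis will come when the same technique is applied to $\E[X^2]$, since there the two copies of $\sigma$ interact through the shared index set $I_j$ and the Fourier terms no longer decouple so cleanly.
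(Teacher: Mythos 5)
Your proof is correct and follows essentially the same path as the paper's: linearity of expectation, factorization by independence of constraints, and evaluation of the single-constraint expectation $\E[w(\sigma_{I,s})] = \hat{w}(\emptyset)$ using the uniformity of the sign vector $s$. The only cosmetic difference is in that last step: the paper argues directly that $\sigma_{I,s}$ is uniformly distributed on $\{-1,1\}^k$ and then reads off $\hat{w}(\emptyset)$ as the average of $w$, whereas you expand $w$ in the Fourier basis and note that uniform random signs annihilate every non-trivial character; these are two equivalent phrasings of the same fact.
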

\begin{proof}
We can expand $\E[X]$ as follows: 
\begin{align}
\E[X] &= \sum_{\sigma \in \{-1, 1\}^n} \E\left[\prod_{j = 1}^{rn}w(\sigma_{I_j, s_j})\right] \notag \\
&= \sum_{\sigma \in \{-1, 1\}^n} \E[w(\sigma_{I, s})]^{rn} \label{eq:EXint}
\end{align}
where we used the fact that constraints are chosen independently. Now we claim that for any $u \in \{-1, 1\}^k$, 
\begin{align*}
\Pr[\sigma_{I, s} = u] = \frac{1}{2^k}
\end{align*}
This follows from the fact that we choose $s$ uniformly over $\{-1, 1\}^k$ and our definition of $\sigma_{I, s}$ in \eqref{eq:sigmaIs}. Thus, 
\begin{align*}
\E[w(\sigma_{I, s})] &= \sum_{u \in \{-1, 1\}^k} w(u)\Pr[\sigma_{I, s} = u]\\
&= \frac{1}{2^k}\sum_{u \in \{-1, 1\}^k} w(u)\\
&= \hat{w}(\emptyset)
\end{align*}
Plugging back into \eqref{eq:EXint} gives the desired result. 
\end{proof}
Next, we will compute the second moment $\E[X^2]$.
\begin{lemma}
\label{lem:secmoment}
Let $g_w(\alpha) = \sum_{S \subseteq [k]} (2\alpha - 1)^{|S|} \hat{w}(S)^2$. The second moment of $X$ is given by
\begin{equation}
\label{eq:E[Xsq]}
\E[X^2] = 2^n \sum_{j = 0}^n {n \choose j} g_w(j/n)^{rn}
\end{equation}
\end{lemma}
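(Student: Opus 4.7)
The plan is to mimic the first-moment calculation of Lemma~\ref{lem:firstmomentsq}: expand $X^2$ as a double sum over assignments $\sigma, \tau \in \{-1,1\}^n$, pull the expectation inside, and exploit the fact that the $rn$ constraints are sampled i.i.d.\ to write
\[
\E[X^2] = \sum_{\sigma, \tau \in \{-1,1\}^n} \E\bigl[w(\sigma_{I,s}) w(\tau_{I,s})\bigr]^{rn}.
\]
The core of the proof is to show that this per-constraint expectation depends on $\sigma, \tau$ only through their Hamming overlap and equals exactly $g_w(j/n)$ when $\sigma, \tau$ agree on $j$ coordinates. After that, collecting the sum over $(\sigma, \tau)$ by overlap yields the binomial formula.

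To evaluate the per-constraint expectation, I would expand $w$ in the Fourier basis, $w = \sum_S \hat{w}(S)\chi_S$, so that
\[
w(\sigma_{I,s})\,w(\tau_{I,s}) = \sum_{S,T} \hat{w}(S)\hat{w}(T)\,\chi_S(\sigma_{I,s})\chi_T(\tau_{I,s}).
\]
Using $\chi_S(\sigma_{I,s}) = \prod_{i \in S} s_i \sigma_{I_i}$, the dependence on the sign vector $s$ factors out as $\prod_{i \in S \triangle T} s_i$ (since $s_i^2 = 1$). Because $s$ is uniform on $\{-1,1\}^k$ and independent of $I$, this has zero expectation unless $S = T$. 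So only diagonal terms survive, leaving $\sum_S \hat{w}(S)^2\, \E[\prod_{i \in S} \sigma_{I_i}\tau_{I_i}]$. Then, because the indices $I_1, \dots, I_k$ are sampled independently and uniformly from $[n]$ (the with-replacement assumption is critical here), this expectation factors coordinate-by-coordinate to $(\E[\sigma_{I_1}\tau_{I_1}])^{|S|}$. A direct calculation gives $\E[\sigma_{I_1}\tau_{I_1}] = \tfrac{1}{n}\sum_i \sigma_i\tau_i = 2(j/n) - 1$ when $\sigma, \tau$ agree on $j$ coordinates, yielding $\E[w(\sigma_{I,s})w(\tau_{I,s})] = g_w(j/n)$ with $\alpha = j/n$.

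To finish, I group the outer sum by the Hamming overlap: for each fixed $\sigma$, there are exactly $\binom{n}{j}$ choices of $\tau$ agreeing with $\sigma$ on $j$ coordinates, and there are $2^n$ choices of $\sigma$, producing the claimed identity. The main obstacle is the Fourier-orthogonality step, where the uniformity of the sign vector $s$ must be used to collapse the Fourier double sum to its diagonal; this is precisely where the Fourier representation pays off, because it diagonalizes the per-constraint covariance structure between pairs of assignments. The remaining steps are direct, but the independence of the $I_i$ hinges on sampling with replacement, so one should note (as the paper does) that the transfer to the without-replacement setting of Section~\ref{sec:randcsp} is handled separately in the appendix.
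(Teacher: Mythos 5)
Your proof is correct and reaches the same key identity $\E[w(\sigma_{I,s})w(\tau_{I,s})]=g_w(\alpha(\sigma,\tau))$ via the same high-level mechanism (Fourier-expand $w$, kill off-diagonal terms by orthogonality, factor per coordinate), but you establish the orthogonality step by a cleaner route than the paper. The paper first computes the joint law $\Pr[\sigma_{I,s}=u,\tau_{I,s}=v]$ as a product over coordinates, then sums $w(u)w(v)$ against it and factorizes the resulting double sum over $(u,v)$; you instead keep $s$ and $I$ as the underlying random variables, write $\chi_S(\sigma_{I,s})\chi_T(\tau_{I,s}) = \bigl(\prod_{i\in S\triangle T}s_i\bigr)\bigl(\prod_{i\in S}\sigma_{I_i}\bigr)\bigl(\prod_{i\in T}\tau_{I_i}\bigr)$, and take the $s$- and $I$-expectations separately: the $s$-expectation vanishes unless $S=T$, and the $I$-expectation factorizes to $(2\alpha-1)^{|S|}$ because the $I_i$ are i.i.d.\ uniform. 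This sidesteps writing down the joint distribution of $(\sigma_{I,s},\tau_{I,s})$ entirely and makes explicit that the cross-term cancellation is purely due to the uniform sign vector while the $\alpha$-dependence comes purely from the index sampling, which is a nice conceptual separation; the paper's version is a bit more computational but arrives at the same factorization. Both correctly rely on with-replacement index sampling and both finish with the identical $\binom{n}{j}$ grouping, so the final count is unchanged.
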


The function $g_w(\alpha)$ is similar to the noise sensitivity of a boolean function \cite{o2003computational} and measures the correlation in the value of $w$ between two assignments $\sigma$, $\tau$ which overlap at $\alpha(\sigma, \tau) n$ locations. As a visual example, Figure \ref{fig:g_w} shows how $g_w(\alpha)$ changes for $k$-XORSAT with varying $k$. The key of our proof is showing that $\E[w(\sigma_{I, s})w(\tau_{I, s})] = g_w(\alpha(\sigma, \tau))$ for a random constraint $f_{I, s}$.  

We will now write $\E[X]^2$ in terms of $g_w$. Since $g_w(1/2) = \hat{w}(\emptyset)^2$, plugging this into \eqref{eq:E[X]sq} gives us
\begin{equation}
\label{eq:e[X]sqing}
\E[X]^2 = 2^{2n} g_w(1/2)^{rn} 
\end{equation}
This motivates us to apply the following lemma from \cite{achlioptas2004threshold}, which will allow us to translate bounds on $g_w(\alpha)$ into bounds on $\E[X]^2/\E[X^2]$: 
\begin{lemma}
\label{lem:achlioptaslem}
Let $\phi$ be any real, positive, twice-differentiable function on $[0, 1]$ and let 
\begin{align*}
S_n = \sum_{j = 0}^{n} {n \choose j} \phi(j/n)^n
\end{align*}
Define $\psi$ on $[0, 1]$ as $\psi(\alpha) = \frac{\phi(\alpha)}{\alpha^{\alpha}(1 - \alpha)^{1 - \alpha}}$.
If there exists $\alpha_{\max} \in (0, 1)$ such that $\psi(\alpha_{\max}) > \psi(\alpha)$ for all $\alpha \ne \alpha_{\max}$, and $\psi''(\alpha_{\max}) < 0$, then there exist constants $B, C > 0$ such that for all sufficiently large $n$, 
\begin{equation}
\label{eq:lemachlioptasres}
B \psi(\alpha_{\max})^n \le S_n \le C \psi(\alpha_{\max})^n 
\end{equation}
\end{lemma}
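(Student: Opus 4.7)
The plan is a discrete Laplace / saddle-point argument driven by Stirling's approximation. Writing $\alpha = j/n$, Stirling gives
\begin{equation*}
\binom{n}{j} = \Theta(n^{-1/2}) \cdot \bigl[\alpha^{\alpha}(1-\alpha)^{1-\alpha}\bigr]^{-n}
\end{equation*}
uniformly for $\alpha$ bounded away from $0$ and $1$, so each interior summand reduces to $\Theta(n^{-1/2}) \cdot \psi(j/n)^n$ by the definition of $\psi$. Hence up to polynomial factors $S_n$ is essentially $\sum_j \psi(j/n)^n$, and the lemma reduces to showing that this sum is $\Theta(\sqrt n) \cdot \psi(\alpha_{\max})^n$, so that the two $\sqrt n$ factors cancel and one obtains the desired constants $B, C$.

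To extract the $\Theta(\sqrt n)$ factor I would localize around $\alpha_{\max}$. Since $\psi$ is $C^2$ with a strict maximum at $\alpha_{\max} \in (0,1)$ and $\psi''(\alpha_{\max}) < 0$, Taylor's theorem yields $\delta, c_1, c_2 > 0$ with
\begin{equation*}
\psi(\alpha_{\max}) e^{-c_2 n (\alpha - \alpha_{\max})^2} \le \psi(\alpha)^n / \psi(\alpha_{\max})^{n-1} \le \psi(\alpha_{\max}) e^{-c_1 n (\alpha - \alpha_{\max})^2}
\end{equation*}
for $|\alpha - \alpha_{\max}| \le \delta$. Restricting the sum to $j$ with $|j/n - \alpha_{\max}| \le \delta$ and comparing with the Gaussian integral $\int e^{-c n (\alpha-\alpha_{\max})^2}\,d\alpha = \Theta(n^{-1/2})$ (times the step size $1/n$ and the number of terms $n$) gives the required $\Theta(\sqrt n) \cdot \psi(\alpha_{\max})^n$ contribution, with matching upper and lower constants.

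The main obstacle is handling the two complementary regions. For the interior tail $|\alpha - \alpha_{\max}| > \delta$, compactness and the strict-max hypothesis give $\psi(\alpha) \le (1-\eta)\psi(\alpha_{\max})$ for some $\eta > 0$; combined with the Stirling estimate this bounds each such summand by $\Theta(n^{-1/2})(1-\eta)^n \psi(\alpha_{\max})^n$, so the entire tail contributes $O(n^{1/2}(1-\eta)^n \psi(\alpha_{\max})^n)$, which is exponentially negligible against $\psi(\alpha_{\max})^n$. The more delicate piece is the boundary, where Stirling degenerates; there I would use the direct inequality $\binom{n}{j} \le 2^{nH(j/n)} = [\alpha^{\alpha}(1-\alpha)^{1-\alpha}]^{-n}$, which implies $\binom{n}{j}\phi(j/n)^n \le \psi(j/n)^n$ without a prefactor. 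Since $\psi(0) = \phi(0)$ and $\psi(1) = \phi(1)$ are both strictly less than $\psi(\alpha_{\max})$, continuity of $\psi$ shows that every $j$ within $O(1)$ of $0$ or $n$ contributes at most $(1-\eta')^n \psi(\alpha_{\max})^n$, again exponentially dominated. Combining the three regimes yields both sides of \eqref{eq:lemachlioptasres}, and the positivity of $B$ follows from the matching lower bound in the Taylor expansion near $\alpha_{\max}$.
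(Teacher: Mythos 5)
The paper does not actually prove this lemma --- it is quoted verbatim from Achlioptas and Peres (2004) and used as a black box --- so there is no in-paper argument to compare against. Your Laplace/saddle-point derivation is the standard route for this kind of statement: Stirling to turn $\binom{n}{j}\phi(j/n)^n$ into $\Theta(n^{-1/2})\psi(j/n)^n$, a second-order Taylor expansion of $\psi$ at its interior critical point $\alpha_{\max}$ (where $\psi'(\alpha_{\max})=0$ and $\psi''(\alpha_{\max})<0$) to get Gaussian decay in a $\delta$-window, a Riemann-sum comparison giving the compensating $\Theta(\sqrt n)$, and compactness to kill the tail. The structure and all the main estimates are correct, including the observation that $\psi(0)=\phi(0)$ and $\psi(1)=\phi(1)$ are finite and strictly below $\psi(\alpha_{\max})$.

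There is, however, a small but genuine gap in your case division. Your ``interior tail'' step uses the uniform Stirling estimate, which is only valid for $j/n$ bounded away from $0$ and $1$ by a fixed constant; your ``boundary'' step covers only $j$ within $O(1)$ of $0$ or $n$. This leaves the intermediate range (e.g.\ $\log n \le j \le \varepsilon n$) unaccounted for: there $j/n$ is not bounded away from $0$ uniformly in $n$, so the Stirling prefactor is not $\Theta(n^{-1/2})$, and yet $j$ is not $O(1)$. The fix is already contained in your own argument: the entropy bound $\binom{n}{j} \le \bigl[(j/n)^{j/n}(1-j/n)^{1-j/n}\bigr]^{-n}$ holds for \emph{every} $j$, hence $\binom{n}{j}\phi(j/n)^n \le \psi(j/n)^n$ with no prefactor at all, and by compactness plus the strict-max hypothesis there is $\eta>0$ with $\psi(\alpha)\le(1-\eta)\psi(\alpha_{\max})$ on the full closed set $\{\alpha\in[0,1] : |\alpha-\alpha_{\max}|\ge\delta\}$. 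The entire tail therefore contributes at most $(n+1)(1-\eta)^n\psi(\alpha_{\max})^n$, which is exponentially negligible. Replacing your two tail regimes by this single entropy-bound argument closes the gap (and simplifies the proof); the Stirling estimate is then needed only in the local window $|\alpha-\alpha_{\max}|\le\delta$, where it applies uniformly.
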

To apply the lemma, we can define $\phi_r(\alpha) = g_w(\alpha)^r$ and $\psi_r(\alpha) = \frac{\phi_r(\alpha)}{\alpha^{\alpha}(1 - \alpha)^{1 - \alpha}}$. Then from \eqref{eq:e[X]sqing}, we note that 
\begin{equation}
\psi_r(1/2)^n = 2^n(g_w(1/2)^r)^n = \E[X]^2/2^n
\label{eq:psiE[X]sq}
\end{equation}
On the other hand, from \eqref{eq:E[Xsq]},
\begin{equation}
\sum_{j = 0}^n {n \choose j} \phi_r(j/n)^n = \E[X^2]/2^n
\label{eq:psiE[Xsq]}
\end{equation}
so if the conditions of Lemma \ref{lem:achlioptaslem} hold for $\alpha_{\max} = 1/2$, we recover that $\E[X]^2/\E[X^2] \ge C$ for some constant $C > 0$. 

One requirement for $\psi_r(\alpha)$ to be maximized at $\alpha = 1/2$ is that $\psi_r'(1/2) = 0$. Expanding $\psi_r'(1/2)$ gives $2g_w(1/2)^{r - 1}(rg_w'(1/2)) = 0$.
Since $g_w(1/2) = \hat{w}(\emptyset)^2 > 0$, we thus require 
\begin{equation}
\label{eq:derivhalf0}
g_w'(1/2) = 2\sum_{S \subseteq [k] : |S| = 1} \hat{w}(S)^2 = 0
\end{equation}
In order to satisfy \eqref{eq:derivhalf0}, we need $\hat{w}(S) = 0$ for all $S \subseteq [k]$ where $|S| = 1$. To use Lemma \ref{lem:achlioptaslem}, we would like to choose $w$ such that \eqref{eq:derivhalf0} holds. We discuss how to choose $w$ to optimize our lower bounds in the appendix. In the next section, we will provide $r$ so that the conditions of Lemma \ref{lem:achlioptaslem} hold at $\alpha = 1/2$ for arbitrary $w$ when \eqref{eq:derivhalf0} is satisfied. 

\subsection{Bounding the Second Moment For Fixed $w$}
We give a general bound on $r$ in terms of our weight function $w$ so that the conditions of Lemma \ref{lem:achlioptaslem} are satisfied for $\alpha = 1/2$. For now, the only constraint we place on $w$ is that \eqref{eq:derivhalf0} holds. The next lemma lets us consider only $\alpha \in [1/2, 1]$.  
\begin{lemma}
\label{lem:alphagehalf}
Let $\alpha \ge 1/2$. Then $g_w(\alpha) \ge g_w(1 - \alpha)$. 
\end{lemma}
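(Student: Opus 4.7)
The plan is to directly substitute into the definition of $g_w$ and exploit the parity structure of the exponents. Writing $g_w(\alpha) - g_w(1-\alpha) = \sum_{S \subseteq [k]} \bigl[(2\alpha - 1)^{|S|} - (2(1-\alpha) - 1)^{|S|}\bigr] \hat{w}(S)^2$, I observe that $2(1-\alpha) - 1 = -(2\alpha - 1)$. So each term in the sum becomes $\bigl[(2\alpha-1)^{|S|} - (-(2\alpha-1))^{|S|}\bigr]\hat{w}(S)^2$.

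The next step is a case split on the parity of $|S|$. If $|S|$ is even, then $(-(2\alpha-1))^{|S|} = (2\alpha-1)^{|S|}$, and the corresponding term vanishes. If $|S|$ is odd, the term equals $2(2\alpha-1)^{|S|}\hat{w}(S)^2$. Hence
\begin{equation*}
g_w(\alpha) - g_w(1-\alpha) = 2\sum_{\substack{S \subseteq [k] \\ |S| \text{ odd}}} (2\alpha - 1)^{|S|} \hat{w}(S)^2.
\end{equation*}

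Finally, since $\alpha \ge 1/2$, we have $2\alpha - 1 \ge 0$, so $(2\alpha-1)^{|S|} \ge 0$ for every odd $|S|$. Each $\hat{w}(S)^2$ is also nonnegative, so the entire sum is nonnegative, giving $g_w(\alpha) \ge g_w(1-\alpha)$ as desired.

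There is no real obstacle here: the lemma is essentially a direct consequence of the symmetry $2(1-\alpha) - 1 = -(2\alpha - 1)$ combined with the sign-preserving behavior of even powers. The only subtlety is being careful to isolate the odd-cardinality terms (which are precisely the terms that fail to be symmetric under $\alpha \leftrightarrow 1-\alpha$) and to note that the restriction $\alpha \ge 1/2$ is exactly what is needed to guarantee nonnegativity of those surviving terms.
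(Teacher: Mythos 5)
Your proof is correct and takes essentially the same approach as the paper's: both rewrite $g_w(\alpha) - g_w(1-\alpha)$ using $1-2\alpha = -(2\alpha-1)$ and observe that only odd-cardinality terms survive, each contributing a nonnegative quantity when $2\alpha - 1 \ge 0$. The only cosmetic difference is that you make the parity case split explicit, whereas the paper expresses it via the factor $\bigl(1 - (-1)^{|S|}\bigr)$.
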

This lemma follows because $g_w(\alpha)$ is a polynomial in $(2\alpha - 1)$ with nonnegative coefficients, and $(2\alpha - 1) > 0$ for $\alpha > 1/2$.  

Now we can bound $\psi_r(\alpha)$ for $\alpha \in [1/2, 1]$. Combined with Lemma \ref{lem:alphagehalf}, the next lemma will give conditions on $r$ such that $\psi_r(1/2) > \psi_r(\alpha)$ for all $\alpha \in [0, 1]$.

\begin{lemma}
\label{lem:rbound}
Let the weight function $w$ satisfy \eqref{eq:derivhalf0}. If 
\begin{equation}
\label{eq:rcond}
r \le \frac{1}{2}\frac{\hat{w}(\emptyset)^2}{\sum_{S : |S| \ge 2} \hat{w}(S)^2}
\end{equation}
$\psi_r(1/2) > \psi_r(\alpha)$ for $\alpha \in [0, 1]$ and $\psi_r''(1/2) < 0$.   
\end{lemma}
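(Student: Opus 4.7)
The plan is to pass to logarithms and analyze
$h_r(\alpha) := \log \psi_r(\alpha) = r \log g_w(\alpha) + H(\alpha)$,
where $H(\alpha) = -\alpha \log\alpha - (1-\alpha)\log(1-\alpha)$ is the natural-log binary entropy, so that $1/(\alpha^\alpha(1-\alpha)^{1-\alpha}) = e^{H(\alpha)}$. Because this entropy factor is symmetric under $\alpha \mapsto 1-\alpha$, combining with Lemma \ref{lem:alphagehalf} reduces the task to proving $h_r(1/2) > h_r(\alpha)$ for every $\alpha \in (1/2, 1]$ and $h_r''(1/2) < 0$; the left half-interval follows by reflection, and $\psi_r''(1/2) = \psi_r(1/2)\, h_r''(1/2)$ because $h_r'(1/2) = 0$. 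The derivatives at $1/2$ are direct: $H'(1/2) = 0$, $H''(1/2) = -4$; hypothesis \eqref{eq:derivhalf0} gives $g_w'(1/2) = 2\sum_{|S|=1}\hat{w}(S)^2 = 0$; and differentiating $g_w$ twice leaves only $|S|=2$ terms at $\alpha = 1/2$, yielding $g_w''(1/2) = 8\sum_{|S|=2}\hat{w}(S)^2$. With $g_w(1/2) = \hat{w}(\emptyset)^2$, the assumed bound on $r$ then forces $h_r''(1/2) = 8r\sum_{|S|=2}\hat{w}(S)^2/\hat{w}(\emptyset)^2 - 4$ to be negative, with slack since the hypothesis controls the full $|S|\ge 2$ sum rather than only $|S|=2$.

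For the global comparison on $(1/2, 1]$ I would establish two matching quadratic estimates and add them. Because \eqref{eq:derivhalf0} kills the size-one coefficients,
$g_w(\alpha)/g_w(1/2) - 1 = \sum_{|S|\ge 2}(2\alpha-1)^{|S|}\hat{w}(S)^2/\hat{w}(\emptyset)^2$.
For $\alpha \in [1/2, 1]$ we have $0 \le 2\alpha-1 \le 1$ and $|S|\ge 2$, so $(2\alpha-1)^{|S|} \le (2\alpha-1)^2$; combined with $\log(1+x)\le x$ this gives $r \log(g_w(\alpha)/g_w(1/2)) \le rK(2\alpha-1)^2$, where $K = \sum_{|S|\ge 2}\hat{w}(S)^2/\hat{w}(\emptyset)^2$. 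On the entropy side, $H''(t) = -1/(t(1-t)) < -4$ strictly for $t \ne 1/2$ while $H'(1/2) = 0$, so Taylor's theorem with integral remainder yields the strict bound $H(\alpha) - \log 2 < -\tfrac{1}{2}(2\alpha-1)^2$ for $\alpha \in (1/2, 1]$. Adding, $h_r(\alpha) - h_r(1/2) < (rK - \tfrac{1}{2})(2\alpha-1)^2$, which is $\le 0$ under $r \le 1/(2K)$ — exactly the hypothesis of the lemma — and in fact strictly negative because the entropy estimate is strict.

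The main obstacle is the first quadratic estimate. Without hypothesis \eqref{eq:derivhalf0} a linear-in-$(2\alpha-1)$ term from size-one Fourier coefficients would enter $g_w(\alpha)/g_w(1/2) - 1$ and could not be absorbed into a $(2\alpha-1)^2$ factor; consequently $\psi_r$ would cease to be maximized at $\alpha = 1/2$ and the second-moment program would break down. Forcing $g_w$ to look locally quadratic about $1/2$, matching the Gaussian-like behavior of the entropy factor, is precisely the structural role that \eqref{eq:derivhalf0} plays. The remaining ingredients — the monotonicity $(2\alpha-1)^{|S|} \le (2\alpha-1)^2$ for $|S| \ge 2$ on $[1/2, 1]$, the strict concavity of $H$, and reflection via Lemma \ref{lem:alphagehalf} to extend from $[1/2, 1]$ to the full interval — are then routine.
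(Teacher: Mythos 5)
Your proof is correct and follows essentially the same route as the paper: both bound $r\log\bigl(g_w(\alpha)/g_w(1/2)\bigr)$ by $r\cdot(2\alpha-1)^2\sum_{|S|\ge 2}\hat w(S)^2/\hat w(\emptyset)^2$ using $\log(1+x)\le x$ and $(2\alpha-1)^{|S|}\le(2\alpha-1)^2$ for $|S|\ge 2$, both use the quadratic entropy estimate $\log 2 - H(\alpha)\ge\tfrac12(2\alpha-1)^2$ (the paper packages this as monotonicity of the ratio $h(\alpha)$ with limit $1/2$ at $1/2$), and both verify $\psi_r''(1/2)<0$ from $g_w''(1/2)=8\sum_{|S|=2}\hat w(S)^2$ under \eqref{eq:derivhalf0}. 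The only difference is presentational: you work additively in $\log\psi_r$ and invoke Taylor with integral remainder, while the paper organizes the identical comparison as a lower bound on a ratio split into two monotone factors.
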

Figure \ref{fig:psi} shows how $r$ controls the shape of the function $\psi_r(\alpha)$. As $r$ increases, $\psi_r''(1/2)$ becomes positive and $\psi_r(\alpha)$ will no longer attain a local maximum in that region. The key step in proving Lemma \ref{lem:rbound} is rearranging $\psi_r(1/2) > \psi_r(\alpha)$ and simplify calculations by using approximations for the logarithmic terms that appear. 

Our bound on $r$ compares the average of $w$ over $\{-1, 1\}^k$ with the correlations between $w$ and the Fourier basis functions. If $w$ has strong correlations with the other Fourier basis functions, two assignments which are equal at $\alpha n$ variables will likely either be both satisfying or both not satisfying as $\alpha$ approaches $1$. This increases $\E[X^2]$ but not $\E[X]^2$ and makes Lemma \ref{lem:secondmomentmethod} provide a trivial bound if $r$ is too large. Thus, if $w$ has strong correlations with the Fourier basis functions, we must choose smaller $r$ as reflected by \eqref{eq:rcond}. 

To get the tightest bounds, we wish to maximize the expression in \eqref{eq:rcond}. Although we prove our lemma for general $w$ requiring only \eqref{eq:derivhalf0}, we also need $w(u) = 0$ whenever $f(u) = 0$ to apply our lemma to satisfiability. Recalling our definition of $X$ in \eqref{eq:X}, this condition ensures that $C_f(n, rn)$ has a solution whenever $X \ne 0$. Thus,  
\begin{equation}
\label{eq:wlambdaf}
w(u) = \lambda(u) f(u)
\end{equation} 
for some $\lambda : \{-1, 1\}^k \rightarrow \mathbb{R}$. If we disregard \eqref{eq:derivhalf0}, choosing $\lambda(u) = 1$ would maximize the bound on $r$ in \eqref{eq:rcond}. The additional requirement of \eqref{eq:derivhalf0} for the second moment method to succeed can be viewed as a ``symmetrization penalty'' on $r$. In the appendix, we discuss how to choose $w$ to optimize our bound on $r$ while satisfying \eqref{eq:derivhalf0} and \eqref{eq:wlambdaf}.

\begin{figure*}
\centering
\subfloat[$a=2,b=4$]{
\includegraphics[scale=0.32]{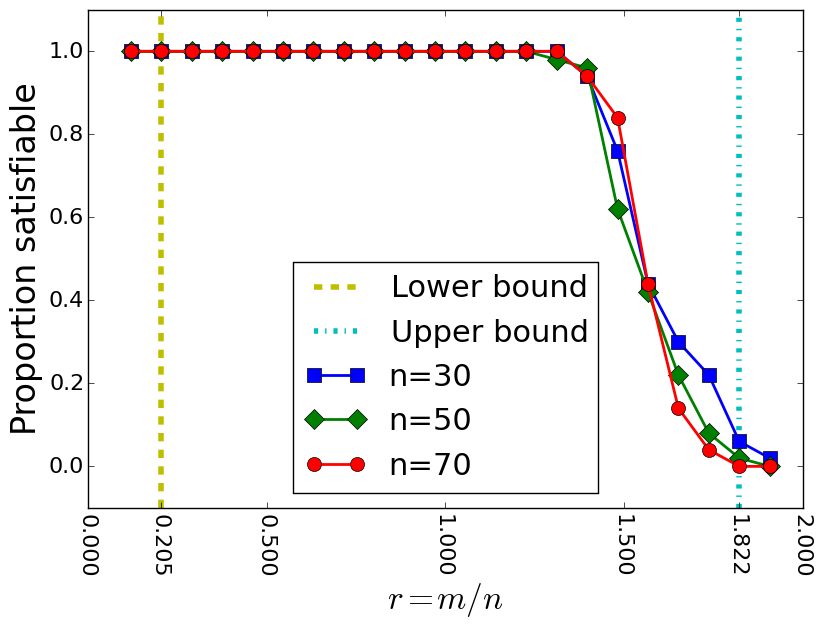}}
\subfloat[$a=2,b=5$]{
\includegraphics[scale=0.32]{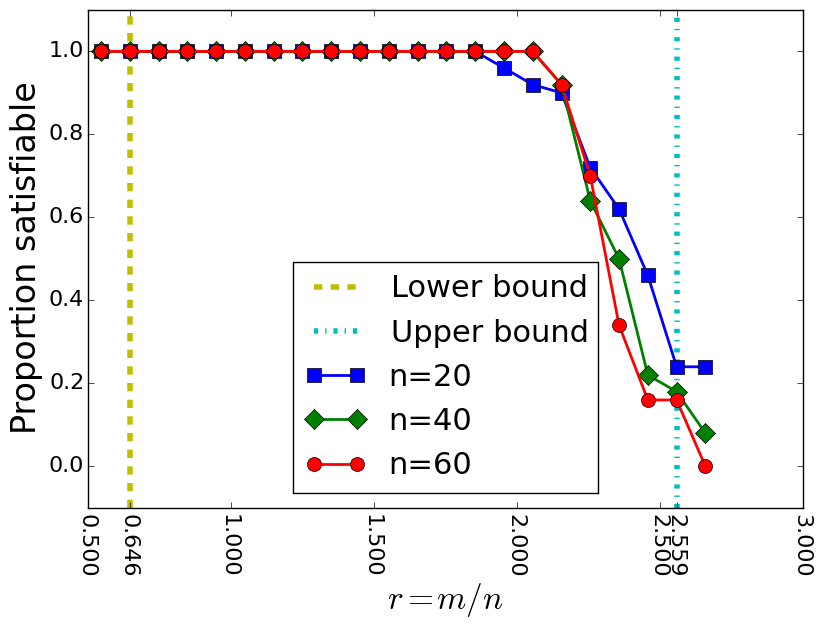}}
\caption{Proportion of CSPs satisfiable out of 50 trials vs. $r$ for tribes functions. We show our bounds for reference.}
\label{fig:experiments}
\end{figure*}

\subsection{Proving the Main Theorem}
\label{sec:mainproof}
We will combine our lemmas to prove Theorem \ref{thm:main}. 
\begin{proof} [Proof of Theorem \ref{thm:main}] 
We wish to apply the second moment method on $X$ defined in \eqref{eq:X}, where $w$ is a function we use to weigh assignments to individual constraints. We choose $w$ as described in the full version of the paper, which satisfies both \eqref{eq:derivhalf0} and \eqref{eq:wlambdaf}. Since $w$ satisfies \eqref{eq:wlambdaf}, $\Pr[C_f(n, rn) \ \text{is satisfiable}] \ge \Pr[X \ne 0] \ge \E[X]^2/\E[X^2]$
by Lemma \ref{lem:secondmomentmethod}. Now we will use Lemma \ref{lem:rbound} to show that the conditions for Lemma \ref{lem:achlioptaslem} are satisfied for $\phi_r = g_w(\alpha)^r$ and $r$ satisfying \eqref{eq:rcond}.
For our choice of $w$, it follows from the  derivations in the appendix that the RHS of \eqref{eq:rcond} becomes 
\begin{align*}
r < r_{\low} = \frac{1}{2} \frac{c}{1 - c} \ \text{where} \ c = \hat{f}(\emptyset) - \frac{\one^T A^+ A \one}{2^k}
\end{align*}
where $A$ is defined in Section \ref{sec:mainresult}. There is a slight technicality in directly applying Lemma \ref{lem:achlioptaslem} because $\phi_r$ might not be nonnegative for $\alpha < 1/2$; we discuss this in the appendix. Now using \eqref{eq:psiE[X]sq} and \eqref{eq:psiE[Xsq]}, and applying Lemma \ref{lem:achlioptaslem}, we can conclude that there exists $C > 0$ such that 
\begin{align*}
\Pr[C_f(n, rn) \ \text{is satisfiable}] \ge \E[X]^2/\E[X^2] \ge C 
\end{align*}
for sufficiently large $n$ and all $r < r_{\low}$.
\end{proof}
There remains a question of what $r_{\low}$ we can hope achieve using a second moment method proof where $X$ is defined as in \eqref{eq:X}. The following lemma provides some intuition for this: 
\begin{lemma}
\label{lem:maxr}
In order for the conditions of Lemma \ref{lem:achlioptaslem} to hold at $\alpha_{\max} = 1/2$ for $X$ in the form of \eqref{eq:X} and any choice of $w$ satisfying \eqref{eq:wlambdaf}, we require 
\begin{equation}
\label{eq:maxr}
r < \log 2 / \log \frac{1}{\hat{f}(\emptyset) - \frac{\one^T A^+ A \one}{2^k}} \le r_{\up} = \frac{\log 2}{\log \frac{1}{\hat{f}(\emptyset)}}
\end{equation}
\end{lemma}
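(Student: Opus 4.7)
The plan is to extract a necessary condition by comparing $\psi_r(1/2)$ against $\psi_r$ at the endpoint $\alpha = 1$, then optimize over the remaining freedom in $w$. With the convention $0^0 = 1$ we have $\psi_r(1) = g_w(1)^r$, while $\psi_r(1/2) = 2\hat{w}(\emptyset)^{2r}$ by Lemma \ref{lem:firstmomentsq} and \eqref{eq:e[X]sqing}. The hypothesis $\psi_r(1/2) > \psi_r(\alpha)$ in Lemma \ref{lem:achlioptaslem}, evaluated at $\alpha = 1$, therefore forces
\begin{equation*}
r < \frac{\log 2}{\log\bigl(g_w(1)/\hat{w}(\emptyset)^2\bigr)},
\end{equation*}
so \eqref{eq:maxr} reduces to showing that the supremum of the right-hand side over admissible $w$ is exactly the claimed value.

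To parametrize admissible $w$, I use \eqref{eq:wlambdaf} and write $\lambda \in \mathbb{R}^{|U|}$ for the values of $w$ on the solution set $U$. Parseval then gives $g_w(1) = \sum_{S}\hat{w}(S)^2 = \lambda^T\lambda/2^k$ and $\hat{w}(\emptyset) = \one^T\lambda/2^k$. Moreover, the local-maximum condition $\psi_r'(1/2) = 0$ required by Lemma \ref{lem:achlioptaslem} forces \eqref{eq:derivhalf0}, i.e.\ $\hat{w}(\{i\}) = 0$ for every $i$. A direct computation shows $\hat{w}(\{i\}) = \tfrac{1}{2^k}\sum_{u\in U}\lambda(u)u_i = (A\lambda)_i/2^k$, so this constraint is exactly $A\lambda = 0$.

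The task thus reduces to maximizing the Rayleigh quotient $(\one^T\lambda)^2/(\lambda^T\lambda)$ over $\lambda \in \ker A$. Since $I - A^+A$ is the orthogonal projector onto $\ker A$, we have $\one^T\lambda = \bigl((I - A^+A)\one\bigr)^T\lambda$ for any $\lambda \in \ker A$, and Cauchy--Schwarz yields
\begin{equation*}
\max_{0 \ne \lambda \in \ker A} \frac{(\one^T\lambda)^2}{\lambda^T\lambda} = \one^T(I - A^+A)\one = |U| - \one^T A^+A\one,
\end{equation*}
attained at $\lambda = (I - A^+A)\one$. Dividing by $2^k$ gives $\sup_w \hat{w}(\emptyset)^2/g_w(1) = \hat{f}(\emptyset) - \one^T A^+A\one/2^k$, and substituting into the necessary condition above yields the left inequality of \eqref{eq:maxr}. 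The right inequality is immediate since $A^+A$ is a positive-semidefinite orthogonal projector, so $\one^T A^+A\one \ge 0$ and the denominator of the logarithm can only shrink when this correction is removed.

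The main obstacle is the constrained Rayleigh-quotient optimization in the third paragraph: once the two constraints \eqref{eq:derivhalf0} and \eqref{eq:wlambdaf} are identified with the linear-algebraic conditions $A\lambda = 0$ and $\operatorname{supp}(w) \subseteq U$, the extremal value is pinned down by the standard pseudoinverse identity $I - A^+A = P_{\ker A}$. Everything else is bookkeeping, and in particular the necessary condition at $\alpha = 1$ is enough, so I do not need to analyze the full shape of $\psi_r$ on $(0,1)$.
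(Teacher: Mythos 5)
Your proof is correct and takes essentially the same route as the paper: both proofs extract the necessary condition $\psi_r(1/2) > \psi_r(1)$, identify the constraints $A\lambda = 0$ and $\operatorname{supp}(w) \subseteq U$, and maximize $\hat{w}(\emptyset)^2 / g_w(1)$ over admissible $w$. The only cosmetic difference is that the paper delegates the constrained maximization to its Lemma \ref{lem:optsolve}, whereas you carry out the Rayleigh-quotient optimization inline via the projector identity $I - A^+A = P_{\ker A}$ and Cauchy--Schwarz; this is the same computation.
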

 
The difference of $\one^T A^+ A \one/2^k$ in the lower logarithm compared to $r_{\up}$ can be viewed as a ``symmetrization penalty'' necessary for our proof to work. 
While Lemma \ref{lem:maxr} does not preclude applications of the second moment method that do not rely on Lemma \ref{lem:achlioptaslem}, consider what happens for $r$ that do not satisfy \eqref{eq:maxr}. For these $r$, the function $\psi_r(\alpha)$ must obtain a maximum at some $\alpha^* \in [0, 1], \alpha^* \ne 1/2$. If it also happens that $\phi_r(\alpha)$ is nonnegative and twice differentiable on $[0, 1]$, and $\psi_r''(\alpha^*) < 0$, then conditions of Lemma \ref{lem:achlioptaslem} hold, and applying it to $\alpha_{\max} = \alpha^*$ along with \eqref{eq:lemachlioptasres}, \eqref{eq:psiE[X]sq}, and \eqref{eq:psiE[Xsq]} will actually imply that 
\begin{align*}
\frac{\E[X]^2}{\E[X^2]} \le \frac{1}{B} \left(\frac{\psi_r(1/2)}{ \psi_r(\alpha^*)}\right)^n
\end{align*}
for some constant $B > 0$, which gives us an exponentially decreasing, and therefore trivial lower bound for the second moment method. Therefore, we believe that \eqref{eq:maxr} is near the best lower bound on $r_{\unsat}$ that we can achieve by applying the second moment method on $X$ in the form of \eqref{eq:X}. 

\section{Experimental Verification of Bounds}
\label{sec:experiments}
We empirically test our bounds with the goal of examining their tightness. For our constraint functions, we will use tribes functions. The tribes function takes the disjunction of $b$ groups of $a$ variables and evaluates to 1 or 0 based on whether the following formula is true: 
\begin{align*}
\tribes_{a, b}(x_1, \ldots, x_{ab}) = \vee_{i = 0}^{b - 1} \left(\wedge_{j = 1}^a x_{ia + j}\right)
\end{align*}
where $+1$ denotes true and $-1$ denotes false. For our experiments, we randomly generate CSP formulas based on $\tribes_{a, b}$. We use the Dimetheus\footnote{https://www.gableske.net/dimetheus} random CSP solver to solve these formulas, or report if no solution exists. We show our results in Figure \ref{fig:experiments}. As expected, our values for lower bounds $r_{\low}$ are looser than the upper bounds $r_{\up}$. 

\section{Conclusion}
Using Fourier analysis and the second moment method, we have shown general bounds on $m/n$, the ratio of constraints to variables; for $m/n$ below these bounds, there is constant probability that a random CSP is satisfiable. We demonstrate that our bounds are easily instantiated and can be applied to obtain novel estimates of the satisfiability threshold for many classes of CSPs. Our bounds depend on how easy it is to symmetrize solutions to the constraint function. We provide a heuristic argument to approximate the best possible lower bounds that our application of the second moment method can achieve; these bounds differ from upper bounds on the satisfiability threshold by a ``symmetrization penalty.'' Thus, an interesting direction of future research is to determine whether we can provide tighter upper bounds that account for symmetrization, or whether symmetrization terms are an artificial product of the second moment method.  

\section{Acknowledgements}
This work was supported by the Future of Life Institute
(grant 2016-158687) and by the National Science Foundation (grant 1649208).

\bibliographystyle{aaai}
\bibliography{refs}

\begin{thebibliography}{}

\bibitem[\protect\citeauthoryear{Achim, Sabharwal, and
  Ermon}{2016}]{achim2016beyond}
Achim, T.; Sabharwal, A.; and Ermon, S.
\newblock 2016.
\newblock Beyond parity constraints: Fourier analysis of hash functions for
  inference.
\newblock In {\em Proceedings of The 33rd International Conference on Machine
  Learning},  2254--2262.

\bibitem[\protect\citeauthoryear{Achlioptas and
  Moore}{2002}]{achlioptas2002asymptotic}
Achlioptas, D., and Moore, C.
\newblock 2002.
\newblock The asymptotic order of the random k-\uppercase{SAT} threshold.
\newblock In {\em Foundations of Computer Science, 2002. Proceedings. The 43rd
  Annual IEEE Symposium on},  779--788.
\newblock IEEE.

\bibitem[\protect\citeauthoryear{Achlioptas and
  Peres}{2004}]{achlioptas2004threshold}
Achlioptas, D., and Peres, Y.
\newblock 2004.
\newblock The threshold for random-\uppercase{SAT} is $2^k \log 2 -
  \text{\uppercase{o}}(k)$.
\newblock {\em Journal of the American Mathematical Society} 17(4):947--973.

\bibitem[\protect\citeauthoryear{Achlioptas, Naor, and
  Peres}{2005}]{achlioptas2005rigorous}
Achlioptas, D.; Naor, A.; and Peres, Y.
\newblock 2005.
\newblock Rigorous location of phase transitions in hard optimization problems.
\newblock {\em Nature} 435(7043):759--764.

\bibitem[\protect\citeauthoryear{Barak \bgroup et al\mbox.\egroup
  }{2015}]{barak2015beating}
Barak, B.; Moitra, A.; O'Donnell, R.; Raghavendra, P.; Regev, O.; Steurer, D.;
  Trevisan, L.; Vijayaraghavan, A.; Witmer, D.; and Wright, J.
\newblock 2015.
\newblock Beating the random assignment on constraint satisfaction problems of
  bounded degree.
\newblock {\em arXiv preprint arXiv:1505.03424}.

\bibitem[\protect\citeauthoryear{Barata and Hussein}{2012}]{barata2012moore}
Barata, J. C.~A., and Hussein, M.~S.
\newblock 2012.
\newblock The moore--penrose pseudoinverse: A tutorial review of the theory.
\newblock {\em Brazilian Journal of Physics} 42(1-2):146--165.

\bibitem[\protect\citeauthoryear{Biere, Heule, and van
  Maaren}{2009}]{biere2009handbook}
Biere, A.; Heule, M.; and van Maaren, H.
\newblock 2009.
\newblock {\em Handbook of satisfiability}, volume 185.
\newblock ios press.

\bibitem[\protect\citeauthoryear{Coja-Oghlan and
  Panagiotou}{2013}]{coja2013going}
Coja-Oghlan, A., and Panagiotou, K.
\newblock 2013.
\newblock Going after the k-\uppercase{SAT} threshold.
\newblock In {\em Proceedings of the forty-fifth annual ACM symposium on Theory
  of computing},  705--714.
\newblock ACM.

\bibitem[\protect\citeauthoryear{Coja-Oghlan and
  Panagiotou}{2016}]{coja2016asymptotic}
Coja-Oghlan, A., and Panagiotou, K.
\newblock 2016.
\newblock The asymptotic k-\uppercase{SAT} threshold.
\newblock {\em Advances in Mathematics} 288:985--1068.

\bibitem[\protect\citeauthoryear{Coja-Oglan and
  Panagiotou}{2012}]{coja2012catching}
Coja-Oglan, A., and Panagiotou, K.
\newblock 2012.
\newblock Catching the k-\uppercase{NAESAT} threshold.
\newblock In {\em Proceedings of the forty-fourth annual ACM symposium on
  Theory of computing},  899--908.
\newblock ACM.

\bibitem[\protect\citeauthoryear{Cook}{1971}]{cook1971complexity}
Cook, S.~A.
\newblock 1971.
\newblock The complexity of theorem-proving procedures.
\newblock In {\em Proceedings of the third annual ACM symposium on Theory of
  computing},  151--158.
\newblock ACM.

\bibitem[\protect\citeauthoryear{Creignou and
  Daud{\'e}}{2003}]{creignou2003generalized}
Creignou, N., and Daud{\'e}, H.
\newblock 2003.
\newblock Generalized satisfiability problems: minimal elements and phase
  transitions.
\newblock {\em Theoretical Computer Science} 302(1):417--430.

\bibitem[\protect\citeauthoryear{Creignou, Daud{\'e}, and
  Dubois}{2007}]{creignou2007expected}
Creignou, N.; Daud{\'e}, H.; and Dubois, O.
\newblock 2007.
\newblock Expected number of locally maximal solutions for random boolean
  \uppercase{CSP}s.
\newblock In {\em 2007 Conference on Analysis of Algorithms, AofA 07},
  109--122.
\newblock Discrete Mathematics and Theoretical Computer Science.

\bibitem[\protect\citeauthoryear{Ding, Sly, and Sun}{2015}]{ding2015proof}
Ding, J.; Sly, A.; and Sun, N.
\newblock 2015.
\newblock Proof of the satisfiability conjecture for large k.
\newblock In {\em Proceedings of the Forty-Seventh Annual ACM on Symposium on
  Theory of Computing},  59--68.
\newblock ACM.

\bibitem[\protect\citeauthoryear{Dubois and Mandler}{2002}]{dubois20023}
Dubois, O., and Mandler, J.
\newblock 2002.
\newblock The 3-\uppercase{XORSAT} threshold.
\newblock {\em Comptes Rendus Mathematique} 335(11):963--966.

\bibitem[\protect\citeauthoryear{Dubois}{2001}]{dubois2001upper}
Dubois, O.
\newblock 2001.
\newblock Upper bounds on the satisfiability threshold.
\newblock {\em Theoretical Computer Science} 265(1):187--197.

\bibitem[\protect\citeauthoryear{Dudek, Meel, and
  Vardi}{2016}]{Dudek2016CombiningTK}
Dudek, J.~M.; Meel, K.~S.; and Vardi, M.~Y.
\newblock 2016.
\newblock Combining the k-cnf and xor phase-transitions.
\newblock In {\em IJCAI}.

\bibitem[\protect\citeauthoryear{Friedgut and
  Bourgain}{1999}]{friedgut1999sharp}
Friedgut, E., and Bourgain, J.
\newblock 1999.
\newblock Sharp thresholds of graph properties, and the k-\uppercase{SAT}
  problem.
\newblock {\em Journal of the American mathematical Society} 12(4):1017--1054.

\bibitem[\protect\citeauthoryear{Mitchell, Selman, and
  Levesque}{1992}]{mitchell1992hard}
Mitchell, D.; Selman, B.; and Levesque, H.
\newblock 1992.
\newblock Hard and easy distributions of \uppercase{SAT} problems.
\newblock In {\em AAAI}, volume~92,  459--465.

\bibitem[\protect\citeauthoryear{Molloy}{2003}]{molloy2003models}
Molloy, M.
\newblock 2003.
\newblock Models for random constraint satisfaction problems.
\newblock {\em SIAM Journal on Computing} 32(4):935--949.

\bibitem[\protect\citeauthoryear{Montanari, Restrepo, and
  Tetali}{2011}]{montanari2011reconstruction}
Montanari, A.; Restrepo, R.; and Tetali, P.
\newblock 2011.
\newblock Reconstruction and clustering in random constraint satisfaction
  problems.
\newblock {\em SIAM Journal on Discrete Mathematics} 25(2):771--808.

\bibitem[\protect\citeauthoryear{O'Donnell}{2003}]{o2003computational}
O'Donnell, R.
\newblock 2003.
\newblock {\em Computational applications of noise sensitivity}.
\newblock Ph.D. Dissertation, Massachusetts Institute of Technology.

\bibitem[\protect\citeauthoryear{O'Donnell}{2014}]{o2014analysis}
O'Donnell, R.
\newblock 2014.
\newblock {\em Analysis of boolean functions}.
\newblock Cambridge University Press.

\bibitem[\protect\citeauthoryear{Pittel and
  Sorkin}{2016}]{pittel2016satisfiability}
Pittel, B., and Sorkin, G.~B.
\newblock 2016.
\newblock The satisfiability threshold for k-\uppercase{XORSAT}.
\newblock {\em Combinatorics, Probability and Computing} 25(02):236--268.

\bibitem[\protect\citeauthoryear{Russell \bgroup et al\mbox.\egroup
  }{2003}]{russell2003artificial}
Russell, S.~J.; Norvig, P.; Canny, J.~F.; Malik, J.~M.; and Edwards, D.~D.
\newblock 2003.
\newblock {\em Artificial intelligence: a modern approach}, volume~2.
\newblock Prentice hall Upper Saddle River.

\bibitem[\protect\citeauthoryear{Vardi}{2014}]{vardi2014boolean}
Vardi, M.~Y.
\newblock 2014.
\newblock Boolean satisfiability: theory and engineering.
\newblock {\em Commun. ACM} 57(3):5.

\end{thebibliography}

\clearpage
\newpage
\section{Appendix: Optimizing $w$}
\label{sec:optimize}
Recall that we wish to choose $w$, our function used to weight solutions to the constraints in $C_f(n, rn)$ so that $\hat{w}(\emptyset)$ is as close as possible to $\hat{f}(\emptyset)$, where $f$ is our constraint function. We require that \eqref{eq:derivhalf0} and \eqref{eq:wlambdaf} hold, and we wish to maximize the bound in Lemma \ref{lem:rbound}. For convenience, we let $\hat{w} = (\hat{w}(S))_{S \subseteq [k]}$, the vector of $2^k$ Fourier coefficients of $w$. The bound in \eqref{eq:rcond} depends only on $\hat{w}(\emptyset)^2/ (\|\hat{w}\|_2^2 - \hat{w}(\emptyset)^2)$, so we can write the following optimization problem: 
\begin{equation}
\begin{split}
\text{Maximize} &\ \hat{w}(\emptyset)^2\\ 
\text{Subject to} &\ \| \hat{w} \|^2 = 1\\
& \hat{w}_{S : |S| = 1} = 0\\
& w(u) = \lambda(u)f(u) \ \text{for all} \ u \in \{-1, 1\}^k
\end{split}
\label{eq:optproblemw}
\end{equation}

The combination of the objective and the first constraint ensures that we maximize $\hat{w}(\emptyset)^2/(\|\hat{w}\|_2^2 - \hat{w}(\emptyset)^2)$. To see this, we note that we can scale $w$ so $\|w\|_2^2 = 1$ without changing the value of $\hat{w}(\emptyset)^2/(\|\hat{w}\|_2^2 - \hat{w}(\emptyset)^2)$. The second constraint ensures that \eqref{eq:derivhalf0} holds, and the third constraint is \eqref{eq:wlambdaf}. 
In this section, we will derive the solution to this optimization problem. 

Recalling that \eqref{eq:wlambdaf} asserts that $w(u) = \lambda (u) f(u)$, we will express the Fourier coefficients of $w$ in terms of $\lambda$ and $f$. For convenience, we will let $U = \{u \in \{-1, 1\}^k : f(u) = 1\}$. We note that from \eqref{eq:hatfS}, for any subset $S \subseteq [k]$
\begin{equation}
\begin{split}
\hat{w}(S) &= \frac{1}{2^k} \sum_{u \in \{-1, 1\}^k} \lambda(u) f(u)\chi_S(u)\\
&= \frac{1}{2^k} \sum_{u \in U} \lambda(u) \chi_S(u)
\end{split}
\label{eq:hatwlambda}
\end{equation}
We formulate this as a matrix-vector product. Overloading notation, we define the $2^k$-dimensional vector $\chi(u) = (\chi_S(u))_{S \subseteq [k]}$. Define the $2^k \times |U|$ matrix $M$ whose columns are the vectors $\chi(u)$ for all $u \in U$, i.e. the matrix whose columns are the Fourier basis functions evaluated at solutions of $f$. Finally, we let $\lambda$ be the $|U|$-dimensional vector whose entries correspond to $\lambda(u)$, arranged in the same order as the columns of $M$. Then from \eqref{eq:hatwlambda}, $M\lambda = 2^k\hat{w}$. Using this, we will transform the optimization problem in \eqref{eq:optproblemw} into a problem over $\lambda$. 

First, the condition that $\hat{w}_{S : |S| = 1} = 0$ becomes the condition that $A \lambda = 0$, where $A = M_{S : |S| = 1}$, the submatrix of $M$ consisting of all rows corresponding to size 1 subsets of $[k]$. Let $A^+$ denote the Moore-Penrose pseudoinverse of $A$. Since $\lambda$ is in the null space of $A$, we can write it as the projection of any arbitrary vector $\beta \in \mathbb{R}^{|U|}$ as follows: 
\begin{equation}
\label{eq:lambdasbeta}
\lambda = (I - A^+A) \beta
\end{equation}
where $I - A^+A$ is a symmetric, idempotent projection matrix \cite{barata2012moore}. The following example illustrates our notation.

\begin{example} [$3$-MAJORITY]
In the case of $3$-MAJORITY, we can instantiate $M$ as follows, up to permutation of the columns. The submatrix $A$ consists of the second to fourth rows of $M$. We label rows by which basis function they correspond to. 
\begin{align*}
M = \begin{array}{c c} 
\begin{array} {c}
\chi_{\emptyset} \\ 
\chi_{\{1\}}\\
\chi_{\{2\}}\\
\chi_{\{3\}}\\
\chi_{\{1, 2\}}\\
\chi_{\{1, 3\}}\\
\chi_{\{2, 3\}}\\
\chi_{\{1, 2, 3\}}\\
\end{array} &
\left[
\begin{array} {r r r r}
1 & 1 & 1 & 1\\
1 & 1 & 1 & -1\\
1 & 1 & -1 & 1\\
1 & -1 & 1 & 1\\
1 & 1 & -1 & -1\\
1 & -1 & 1 & -1\\
1 & -1 & -1 & 1\\
1 & -1 & -1 & -1\\
\end{array}
\right]
\end{array}
\end{align*}
\end{example}

Now we opt to write our constraint and objective in terms of $\beta$. We note that 
\begin{align*}
\|w\|_2^2 &= \frac{1}{2^{2k}}\lambda^T M^T M \lambda \\
&= \frac{1}{2^k} \lambda^T \lambda 
\end{align*}
because the columns of $M$ are orthogonal to each other. Plugging in our expression from \eqref{eq:lambdasbeta} and using the fact that $I - A^+A$ is symmetric and idempotent, we get 
\begin{align*}
\|w\|_2^2 &= \frac{1}{2^k} \beta^T (I - A^+A) \beta
\end{align*}
which gives us the transformed constraint 
\begin{equation}
\label{eq:betaconstraint}
\beta^T (I - A^+A) \beta = 2^k
\end{equation}
Next, we transform our objective. Since $\hat{w}(\emptyset) = \one^T \lambda/2^k$, plugging in \eqref{eq:lambdasbeta} and including the constraint \eqref{eq:betaconstraint}, we get the final optimization problem
\begin{equation}
\begin{split}
\text{Maximize} &\ \one^T (I - A^+A) \beta\\
\text{Subject to} &\ \beta^T (I - A^+A) \beta = 2^k
\end{split}
\label{eq:finalopt}
\end{equation}
The following lemma gives us the optimal solution to this problem: 
\begin{lemma}
\label{lem:optsolve}
The (not necessarily unique) optimal $\beta$ for this optimization problem occurs at 
\begin{align*}
\beta = \sqrt{\frac{2^k} {\one^T (I - A^+A)\one}} \cdot \one
\end{align*}
for which the corresponding value of $\hat{w}(\emptyset)$ is 
\begin{align*}
\hat{w}(\emptyset) =  \sqrt{\frac{\one^T(I - A^+A) \one}{2^k}}
\end{align*}
\end{lemma}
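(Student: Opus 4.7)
The plan is to exploit the fact that $P := I - A^+A$ is a symmetric idempotent matrix, namely the orthogonal projector onto the null space of $A$ (this is the defining property of the Moore-Penrose pseudoinverse cited earlier). Once that structure is recognized, the optimization in \eqref{eq:finalopt} reduces to a one-line Cauchy-Schwarz argument.

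Concretely, I would first rewrite both the objective and the constraint using $P^T = P$ and $P^2 = P$. The constraint becomes $\beta^T P \beta = \beta^T P^2 \beta = \|P\beta\|_2^2 = 2^k$, and the objective becomes $\one^T P \beta = \one^T P^2 \beta = (P\one)^T (P\beta)$. Thus the problem is equivalent to maximizing $(P\one)^T v$ over vectors $v$ in the range of $P$ with $\|v\|_2^2 = 2^k$. By Cauchy-Schwarz,
\begin{equation*}
(P\one)^T v \le \|P\one\|_2 \cdot \|v\|_2 = \sqrt{\one^T P \one}\cdot \sqrt{2^k},
\end{equation*}
with equality iff $v$ is a nonnegative scalar multiple of $P\one$. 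Setting $v = P\beta$ with $\beta = c\,\one$ (so that $v = c P\one$ automatically) gives the constraint $c^2 \,\one^T P \one = 2^k$, hence $c = \sqrt{2^k/(\one^T P \one)}$, matching the claimed optimum.

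To recover $\hat{w}(\emptyset)$, I would use $\lambda = P\beta$ from \eqref{eq:lambdasbeta} together with the identity $\hat{w}(\emptyset) = \one^T \lambda / 2^k$ derived from \eqref{eq:hatwlambda} (since $\chi_\emptyset \equiv 1$). This gives
\begin{equation*}
\hat{w}(\emptyset) = \frac{1}{2^k} \one^T P \beta = \frac{c}{2^k}\,\one^T P \one = \sqrt{\frac{\one^T(I - A^+A)\one}{2^k}},
\end{equation*}
as stated. The claim of non-uniqueness is immediate: any $\beta$ of the form $c\,\one + \eta$ with $\eta$ in the row space of $A$ (so $P\eta = 0$) yields the same $\lambda = P\beta$ and the same objective value.

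I do not expect a genuinely hard step. The only minor subtlety is the degenerate case $P\one = 0$, i.e.\ $\one \in \mathrm{row}(A)$; then Cauchy-Schwarz forces the objective to be $0$ for every feasible $\beta$, the lemma's formula correctly returns $\hat{w}(\emptyset) = 0$, and any feasible $\beta$ is optimal (so the chosen $\beta = c\,\one$ remains valid, though the scalar $c$ is undefined and one may pick any feasible point). Apart from this bookkeeping, the proof is essentially the equality case of Cauchy-Schwarz applied to the projected vectors $P\one$ and $P\beta$.
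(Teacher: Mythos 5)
Your proof is correct and takes essentially the same route as the paper: both exploit the symmetric idempotence of $I - A^+A$ to rewrite the problem in terms of the projected vectors and then observe that the optimum lies along the projection of $\one$. You simply make explicit (via Cauchy-Schwarz) a step the paper labels as ``clear,'' and add useful but inessential remarks on non-uniqueness and the degenerate case $P\one = 0$.
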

\begin{proof}
Since $I - A^+A$ is idempotent and symmetric, an equivalent optimization problem as \eqref{eq:finalopt} is 
\begin{align*}
\text{Maximize} &\ \one^T (I - A^+A)^T(I - A^+A) \beta \\
\text{Subject to} &\ \beta^T (I - A^+A)^T(I - A^+A) \beta = 2^k
\end{align*}
From this, it is clear that the direction of $(I - A^+A)\beta$ which maximizes the objective occurs when $(I - A^+ A)\beta = \rho (I - A^+ A) \one$, in which case we can choose $\beta = \rho \one$ for some $\rho \in \mathbb{R}$. Solving for $\rho$ gives us the desired result. 
\end{proof}
If we plug this value for $\hat{w}(\emptyset)$ into the bound in Lemma \ref{lem:rbound}, we will get the value of $r_{\low}$ described in Theorem \ref{thm:main}. 

Lemma \ref{lem:optsolve} does not only give us the maximum value of the bound in Lemma \ref{lem:rbound}; it also gives us a sense of the best $r_{\low}$ we can achieve using our proof techniques. We can use it to prove Lemma \ref{lem:maxr}: 

\begin{proof} [Proof of Lemma \ref{lem:maxr}]
First, recall that in order for Lemma \ref{lem:achlioptaslem} to hold at $\alpha_{\max} = 1/2$, we require \eqref{eq:derivhalf0} to be true. Now for any $w$, we also require $\psi_r(1/2) > \psi_r(1)$ if the lemma is to apply for $\alpha_{\max} = 1/2$. Expanding this condition, we get 
\begin{align*}
2g_w(1/2)^r &> g_w(1)^r\\
\implies \log 2 &> r \log \frac{g_w(1)}{g_w(1/2)}
\end{align*}
Since $g_w(1)/g_w(1/2) > 1$, the logarithm is positive so we can divide both sides and expand to get 
\begin{align*}
r &< \frac{\log 2}{\log \left(\frac{\sum_{S \subseteq [k]} \hat{w}(S)^2}{\hat{w}(\emptyset)^2}\right)}\\
&< \frac{\log 2}{\log \frac{2^k}{\one^T (I - A^+ A) \one}}
\end{align*}
We obtained the last line by maximizing the RHS over all $w$ that satisfy \eqref{eq:derivhalf0} and \eqref{eq:wlambdaf} and noting that we can apply our solution from Lemma \ref{lem:optsolve} here.
\end{proof}

\section{Appendix: Proofs and Loose Ends}
\subsection{Proofs}
\begin{proof} [Proof of Lemma \ref{lem:fourierc}]
In the case where the rows of $A$ are linearly independent, a property of the pseudoinverse tells us that $A^+ = A^T(A A^T)^{-1}$ \cite{barata2012moore}. Then by noting that $A \one = 2^k \hat{f}_{S : |S| = 1}$ and $A A^T = 2^k B$, 
\begin{align*}
c &= \hat{f}(\emptyset) - \frac{\one^T A^T (A A^T)^{-1} A \one}{2^k}\\
&= \hat{f}(\emptyset) - \hat{f}_{S : |S| = 1}^T B^{-1} \hat{f}_{S : |S| = 1}
\end{align*}
\end{proof}

\begin{proof} [Proof of Lemma \ref{lem:secmoment}]
As before, we will expand $\E[X^2]$ as follows: 
\begin{align}
\E[X^2] &= \sum_{\sigma, \tau \in \{-1, 1\}^n} \E\left[\prod_{j = 1}^{rn} w(\sigma_{I_j, s_j}) w(\tau_{I_j, s_j})\right] \notag \\
&= \sum_{\sigma, \tau \in \{-1, 1\}^n} \E[w(\sigma_{I, s})w(\tau_{I, s})]^{rn} \label{eq:EXsqinner}
\end{align}
Now we can compute $\E[w(\sigma_{I, s})\tau(\sigma_{I, s})]$ based on the number of indices $i$ where $\sigma_i = \tau_i$ which we will denote as $\alpha(\sigma, \tau) n$. First, we claim that for any $u, v \in \{-1, 1\}^k$,
\begin{equation}
\begin{split}
\Pr[\sigma_{I, s} = u, \tau_{I, s} = v] = \\ 
\frac{1}{2^k} \prod_{j = 1}^{k} \alpha(\sigma, \tau)^{\1(u_j = v_j)}(1 - \alpha(\sigma, \tau))^{\1(u_j \ne v_j)} 
\end{split}
\label{eq:prsigmatau}
\end{equation}
This holds since we can compute \eqref{eq:prsigmatau} by first computing the probability that $\sigma_{i_j} \tau_{i_j} = u_jv_j$ for all $i_j \in I$ and then the probability that $s_j\sigma_{i_j}$ and $s_j\tau_{i_j}$ actually match $u_j$ and $v_j$ given that the former is true. We can compute the probability of the former by noting that if $u_jv_j = 1$, then $i_j$ must correspond to one of the $\alpha(\sigma, \tau)$ indices where $\sigma_{i_j} = \tau_{i_j}$. The probability of the latter is simply $1/2^k$ since $s$ is chosen uniformly from $\{-1, 1\}^k$. Now note that  
\begin {align*}
\E[w(\sigma_{I, s})w(\tau_{I, s})] = &\sum_{u, v \in \{-1, 1\}^k} w(u)w(v) \cdot \\
&\Pr[\sigma_{I, s} = u, \tau_{I, s} = v]
\end{align*}
Expanding the probability using \eqref{eq:prsigmatau} and  writing $w$ in terms of its Fourier expansion, we get 
\begin{align}
\begin{split}
\E[w(\sigma_{I, s})\tau(\sigma_{I, s})] =\\
\sum_{u, v \in \{-1, 1\}^k} \left(\sum_{S \subseteq [k]} \hat{w}(S) \chi_S(u) \right) \left(\sum_{T \subseteq [k]} \hat{w}(T) \chi_T(v)\right) \cdot \\
\frac{1}{2^k} \prod_{i = 1}^k \alpha(\sigma, \tau)^{\1(u_i=v_i)}(1 - \alpha(\sigma, \tau))^{u_i \ne v_i} = 
\end{split}
\notag \\
\begin{split}
\frac{1}{2^k} \sum_{S, T \subseteq [k]} \hat{w}(S) \hat{w}(T) \left(\sum_{u, v \in \{-1, 1\}^k} \chi_S(u) \chi_T(v) \cdot \right.\\ 
\left. \prod_{i = 1}^k \alpha(\sigma, \tau)^{\1(u_i = v_i)} (1 - \alpha(\sigma, \tau))^{\1(u_i \ne v_i)}\right)
\end{split}
\label{eq:fourierexpand}
\end {align}
We can factorize the inner part of the summation as follows: 
\begin{align*}
\sum_{u, v \in \{-1, 1\}^k} \chi_S(u) \chi_T(v) \prod_{i = 1}^k \alpha(\sigma, \tau)^{\1(u_i = v_i)} \cdot \\ 
(1 - \alpha(\sigma, \tau))^{\1(u_i \ne v_i)} =\\ 
\prod_{i \in S \cap T} (4\alpha(\sigma, \tau) - 2) \prod_{i \in S \setminus T} 0 \prod_{i \in T \setminus S} 0 \prod_{i \in [k] \setminus (S \cup T)} 2\\
\end{align*}
Thus, terms where $S \ne T$ cancel to 0, so plugging the result back into \eqref{eq:fourierexpand} gives 
\begin{align*}
\E[w(\sigma_{I, s})\tau(\sigma_{I, s})] &= \sum_{S \subseteq [k]} (2 \alpha(\sigma, \tau) - 1)^{|S|} \hat{w}(S)^2\\ 
&= g_w(\alpha(\sigma, \tau))
\end{align*}
Plugging this expression back into \eqref{eq:EXsqinner}, we get that 
\begin{align*}
\E[X^2] = \sum_{\sigma, \tau \in \{-1, 1\}^n} g_w(\alpha(\sigma, \tau))
\end{align*}
Now for every $\sigma$ and $j = 0, \ldots, n$, there are exactly ${n \choose j}$ choices of $\tau$ for which $\alpha(\sigma, \tau) = j/n$. Thus, combining terms this way finally gives us the expression in \eqref{eq:E[Xsq]}.
\end{proof}

\begin{proof} [Proof of Lemma \ref{lem:alphagehalf}]
We have 
\begin{align*}
g_w(\alpha) - g_w(1 - \alpha) = \\ 
\sum_{S \subseteq [k]} \hat{w}(S)^2 \left[(2\alpha - 1)^{|S|} - (1 - 2\alpha)^{|S|} \right] = \\
\sum_{S \subseteq [k]} \hat{w}(S)^2(2\alpha - 1)^{|S|}(1 - (-1)^{|S|})
\end{align*}
Since $2\alpha - 1 \ge 0$ for $\alpha \ge 1/2$, all terms in the sum will be positive so $g_w(\alpha) \ge g_w(1 - \alpha)$.
\end{proof}

\begin{proof} [Proof of Lemma \ref{lem:rbound}]
We first show that $\psi_r(\alpha)$ takes its maximum at $\alpha = 1/2$. For all $\alpha \in [1/2, 1]$, we wish to prove
\begin{align*}
\frac{g_w(\alpha)^r}{\alpha^{\alpha}(1 - \alpha)^{1 - \alpha}} < 2g_w(1/2)^r = \psi_r(1/2)
\end{align*}
Rearranging and taking the logarithm of both sides gives
\begin{align*}
r \log \frac{g_w(\alpha)}{g_w(1/2)} < \log 2 + \alpha\log \alpha + (1 - \alpha)\log(1 - \alpha)
\end{align*}
Since $g_w(\alpha) > \hat{w}(\emptyset)^2 = g_w(1/2)$ for $\alpha > 1/2$, $\log \frac{g_w(\alpha)}{g_w(1/2)}$ is nonnegative so we can divide it from both sides. Thus, it suffices to have 
\begin{equation}
r < \frac{\log 2 + \alpha \log \alpha + (1 - \alpha) \log (1 - \alpha)}{\log \frac{g_w(\alpha)}{g_w(1/2)}} \label{eq:rlog}
\end{equation}
By using the rule $\log(1 + x) \le x$, we obtain  
\begin{align*}
\log \frac{g_w(\alpha)}{g_w(1/2)} &\le \frac{g_w(\alpha) - g_w(1/2)}{g_w(1/2)}\\
&\le \frac{(2\alpha - 1)^2\sum_{S : |S| \ge 2} \hat{w}(S)^2(2 \alpha - 1)^{|S| - 2}}{\hat{w}(\emptyset)^2}
\end{align*}
We got to the second line by expanding and applying the fact that \eqref{eq:derivhalf0} holds. 
Plugging this into \eqref{eq:rlog} shows that it is sufficient for 
\begin {equation}
\begin{split}
r <& \frac{\log 2 + \alpha \log \alpha + (1 - \alpha) \log (1 - \alpha)}{(2\alpha - 1)^2} \cdot \\  &\frac{\hat{w}(\emptyset)^2}{\sum_{S : |S| \ge 2} \hat{w}(S)^2 (2\alpha - 1)^{|S| - 2}}
\end{split}
\label{eq:rlogbound}
\end{equation}
We can check that the function 
\begin{align*}
h(\alpha) = \frac{\log 2 + \alpha \log \alpha + (1 - \alpha) \log (1 - \alpha)}{(2\alpha - 1)^2}
\end{align*}
is increasing on the interval $[1/2, 1]$, and in addition $\lim_{\alpha \rightarrow 1/2} h(\alpha) = 1/2$. Meanwhile, 
\begin{align*}
\min_{\alpha \in [1/2, 1]}\frac{\hat{w}(1/2)^2}{\sum_{S : |S| \ge 2} \hat{w}(S)^2 (2\alpha - 1)^{|S| - 2}} =\\ 
\frac{\hat{w}(\emptyset)^2}{\sum_{S : |S| \ge 2} \hat{w}(S)^2}  
\end{align*}
at $\alpha = 1$. Therefore, for all $\alpha \in [1/2, 1]$, the RHS of \eqref{eq:rlogbound} is lower bounded by 
\begin{align*}
\frac{1}{2}\frac{\hat{w}(\emptyset)^2}{\sum_{S : |S| \ge 2} \hat{w}(S)^2}  
\end{align*}
which is precisely our condition on $r$ in \eqref{eq:rcond}. Thus, if \eqref{eq:rcond} holds,  
then \eqref{eq:rlogbound} must hold for $\alpha \in [1/2, 1]$, which implies that $\psi_r(\alpha)$ is maximized at $\alpha = 1/2$. Now we check that $\psi_r''(1/2) < 0$. We can first compute the first derivative as 
\begin{align*}
\psi_r'(\alpha) =\\ 
\frac{g_w(\alpha)^{r - 1}(rg_w'(\alpha) + g_w(\alpha)(\log(1 - \alpha) - \log(\alpha))}{\alpha^{\alpha}(1 - \alpha)^{1 - \alpha}}
\end{align*}
Since $g_w'(1/2) = 0$ from \eqref{eq:derivhalf0}, $\psi''(1/2) < 0$ if the derivative of 
\begin{align*}
rg_w'(\alpha) + g_w(\alpha)(\log(1 - \alpha) - \log(\alpha))
\end{align*} 
is negative. We can compute this derivative as 
\begin{align*}
rg_w''(\alpha) + g_w'(\alpha)\log \frac{1 - \alpha}{\alpha} - g_w(\alpha)\left(\frac{1}{\alpha} + \frac{1}{1 - \alpha}\right)
\end{align*}
Plugging in $\alpha = 1/2$, the expression simplifies to 
\begin{align*}
rg_w''(1/2) - 4g_w(1/2)
\end{align*}
Thus, if \eqref{eq:rcond} holds, 
\begin{align*}
r < \frac{\hat{w}(\emptyset)^2}{2\sum_{S : |S| = 2} \hat{w}(S)^2} = \frac{4g_w(1/2)}{g_w''(1/2)}
\end{align*}
which means that $\psi_r''(1/2)$ is also satisfied. 
\end{proof}
Using the lemmas developed earlier regarding the moments of $X$, we can complete the proof of Proposition \ref{prop:upper}. 
\begin{proof} [Proof of Proposition \ref{prop:upper}]
We let $X$ be the number of solutions to $C_f(n, rn)$. This corresponds to the choice of $w = f$. Then Lemma \ref{lem:firstmomentsq} tells us that 
\begin{align*}
\E[X] = (2 \hat{f}(\emptyset)^r)^n
\end{align*}
With 
\begin{align*}
r > r_{\up} = \frac{\log 2}{\log 1/\hat{f}(\emptyset)}
\end{align*}
we get $2 \hat{f}(\emptyset)^r < 1$. Therefore, $\lim_{n \rightarrow \infty} \E[X] = 0$. By Markov's inequality, 
\begin{align*}
\lim_{n \rightarrow \infty} \Pr[X \ge 1] \le \lim_{n \rightarrow \infty} \E[X] = 0
\end{align*}
as desired.
\end{proof}
Next, we justify our application of Lemma \ref{lem:achlioptaslem} from Section \ref{sec:mainproof} in greater detail. 

\begin{proof} [Detailed application of Lemma \ref{lem:achlioptaslem}]
We see that we cannot apply Lemma \ref{lem:achlioptaslem} directly to $\phi_r(\alpha) = g_w(\alpha)^r$, as $\phi_r(\alpha)$ may be negative for $\alpha < 1/2$ if $w$ can take on negative values. Instead, we define 
\begin{align*}
\phi_r^*(\alpha) = 
\begin{cases}
g_w(\alpha)^r \ &\text{if} \ \alpha \ge 1/2\\
g_w(1 - \alpha)^r \ &\text{if} \ \alpha < 1/2
\end{cases}
\end{align*}
The idea is to apply Lemma \ref{lem:achlioptaslem} to $\phi_r^*(\alpha)$ instead. 
The benefit of doing this is that $\phi_r^*(\alpha) > 0$ for all $\alpha \in [0, 1]$, as $g_w(\alpha) > 0$ if $\alpha \ge 1/2$. We will also check that $\phi_r^*(\alpha)$ is twice differentiable on $[0, 1]$.  For $\alpha \ne 1/2$, this is clear. Now we note that 
\begin{align*}
\frac{dg_w(1 - \alpha)^r}{d\alpha} = -r g_w(1 - \alpha)^{r - 1} g_w'(1 - \alpha)
\end{align*}
At $\alpha = 1/2$, this evaluates to $0$ because of \eqref{eq:derivhalf0}. Likewise, the derivative of $g_w(\alpha)^r$ at $\alpha = 1/2$ is also $0$. Thus, $\phi_r^*(\alpha)$ is first-order differentiable on $[0, 1]$. To show that the second derivative exists at $\alpha = 1/2$, we note that 
\begin{align*}
\frac{d^2g_w(1 - \alpha)^r}{d\alpha^2} = &r(r - 1) g_w(1 - \alpha)^{r - 2} \cdot g_w'(1 - \alpha)^2 + \\
& rg_w(1 - \alpha)^{r - 1} \cdot g_w''(1 - \alpha)
\end{align*}
If we compute the same expression for the second derivative of $g_w(\alpha)^r$, we see that the two expressions have identical terms at $\alpha = 1/2$. Thus, second derivatives match, so $\phi_r^*(\alpha)$ is twice-differentiable on $[0, 1]$. 

Now we can apply Lemma \ref{lem:achlioptaslem} to $\phi_r^*$. Define 
\begin{align*}
\psi_r^* = \frac{\phi_r^*}{\alpha^{\alpha}(1 - \alpha)^{1 - \alpha}}
\end{align*}
Since $\alpha^{\alpha}(1 - \alpha)^{1 - \alpha}$ is symmetric around $1/2$, Lemma \ref{lem:rbound} implies that for $r$ satisfying \eqref{eq:rcond}, $\psi_r^*(1/2) > \psi_r^*(\alpha)$ for all $\alpha \in [0, 1]$ where $\alpha \ne 1/2$. Furthermore, since $(\psi_r^*)''(1/2) = \psi_r''(1/2) < 0$ by symmetry around $1/2$, the conditions for Lemma \ref{lem:achlioptaslem} are satisfied so 
\begin{equation}
\label{eq:achlioptaslemoutcome}
\frac{\E[X]^2}{2^n\sum_{j = 0}^n {n \choose j} \phi_r^*(j/n)} > C
\end{equation}
for some constant $C > 0$ and sufficiently large $n$. Finally, we note that from \eqref{eq:E[Xsq]}, 
\begin{align*}
\E[X^2] &= 2^n \sum_{j = 0}^n {n \choose j} \phi_r(j/n)\\
&\le 2^n \sum_{j = 0}^n {n \choose j} \phi_r^*(j/n)
\end{align*}
because $\phi_r(\alpha) \le \phi_r^*(\alpha)$ for $\alpha < 1/2$, as a consequence of Lemma \ref{lem:alphagehalf}. Plugging this result into \eqref{eq:achlioptaslemoutcome} finally gives us 
\begin{align*}
\frac{\E[X]^2}{\E[X^2]} \ge \frac{\E[X]^2}{2^n\sum_{j = 0}^n {n \choose j} \phi_r^*(j/n)} > C
\end{align*}
\end{proof}
\subsection{Sampling With vs. Without Replacement}
We will provide an explanation for why we can assume that we can sample constraints and constraint indices with replacement when proving our main theorem. This explanation is due to Section 3 of \cite{achlioptas2004threshold} for $k$-SAT, and applies directly to our more general setting. 

We call a constraint $f_{I, s}$ improper if $I$ contains repeated variables or $f_{I, s}$ is itself repeated. The probability that a constraint contains repeated variables is bounded above by $k^2/n$, so with high probability there are $o(n)$ constraints with repeated variables. Likewise, with high probability there are $o(n)$ repeated clauses, so there are $o(n)$ improper constraints with high probability. Furthermore, the distribution over proper constraints remains uniform in the with-replacement setting. Thus, if $C_f(n, rn)$ is satisfiable with constant probability in the with-replacement setting for $m = rn$, then it will be satisfiable with constant probability in the without-replacement setting for $m = rn - o(n)$ constraints. Since we only subtract a $o(n)$ factor, we retain the same ratio $r$.

\end{document}